\tikzstyle{line}=[draw]
\theoremstyle{remark}
\newtheorem{thm}{Theorem}
\newtheorem{lem}{Lemma}
\newtheorem{cor}{Corollary}
\newtheorem{defn}{Definition}
\newtheorem{exmp}{Example}
\newtheorem{rem}{Remark}
\DeclareRobustCommand{\rchi}{{\mathpalette\irchi\relax}}
\newcommand{\irchi}[2]{\raisebox{\depth}{$#1\chi$}}
\begin{document}
	\title{Optimal Weakly Secure Linear Codes for Some Classes of the Two-Sender Index Coding Problem}
	\author{\vspace{-.3in}\\Chinmayananda Arunachala, and B. Sundar Rajan\\
		Email: \{chinmayanand,bsrajan\}@iisc.ac.in}

\maketitle
\begin{abstract}
	The two-sender unicast index coding problem is the most fundamental multi-sender index coding problem. The two senders collectively cater to the demands of all the receivers, by taking advantage of the knowledge of their side-information. Each receiver demands a unique message and has some side-information. Weakly secure index coding problem is a practical version of the index coding problem in the presence of an eavesdropper. The eavesdropper can not gain any information about the messages he does not have, by listening to the senders' transmissions. We provide constructions of weakly secure linear codes for different classes of the two-sender unicast index coding problem, using those of its sub-problems. The constructions are valid only if such codes exist for all the sub-problems under consideration. We identify some classes of the two-sender problem, where the constructions provide optimal weakly secure linear index codes.
\end{abstract}

\section{Introduction}
The index coding problem (ICP) introduced in \cite{BK} consists of a single-sender which caters to the demands of all the receivers by availing the knowledge of their side-information. The sender transmits coded messages, which  reduces the number of broadcast transmissions compared to the naive broadcast of each message. In many practical scenarios, messages are distributed over multiple senders due to data storage limits, or due to erroneous reception of some messages. It can also be  strategically done to reduce end-to-end latency in content delivery, as in multi-server coded caching \cite{MSCC}. Content can be delivered using large storage capacity nodes called caching helpers in cellular networks \cite{KAC}. Data is also distributed and stored over multiple nodes in distributed storage networks \cite{luo2016coded}. Hence, multi-sender index coding is an important component of all the above mentioned problems, which exploit the knowledge of receivers' side-information, when the messages are distributed over multiple senders.

A special class of multi-sender  ICPs was first studied in \cite{SUOH}, where each receiver knows a unique message and demands a subset of other messages. Inner and outer bounds for the capacity region of many variations of multi-sender ICP were provided in \cite{sadeghi2016distributed,YPFK,MOJ2,MOJ}. Existance of links with fixed finite capacities from every sender to every receiver has been assumed in contrast to the previous works. Variations of random coding were used to provide the bounds. A fundamental class of multi-sender ICP is the two-sender ICP, which was first studied in \cite{COJ}. Some single-sender index coding schemes based on graph theory were extended to the two-sender unicast ICP (TUICP), where each receiver demands a unique message. Optimal broadcast rates of a special class of TUICPs, and related code constructions using optimal codes of single-sender sub-problems were studied in \cite{CTLO}, \cite{CVBSR2}, \cite{CVBSR1}. A more practical version of the two-sender ICP is the secure ICP, which arises in on-demand content delivery. The servers do not want a particular client (or a set of clients) to gain any information about any content it does not subscribe for. The client gains access to the unsubscribed content by listening to the transmissions of all the senders, which were  intended for the remaining clients. 
Many variations of the single-sender secure ICP have been studied previously.

Single-sender ICP with security was first studied in \cite{DSC2}. An eavesdropper has a subset of messages and the transmitted codeword. The objective of secure index coding is to encode the messages such that the eavesdropper is unable to gain information about a specified subset of messages it does not have. Different levels of security were introduced. In weak security, the eavesdropper is not able to obtain additional information about each message he does not have. Block security was introduced which generalizes the notion of weak security. Necessary and sufficient conditions for a linear code to be block secure were given. Some relations between the minimum distance of the code, level of block security attained, and the amount of side-information with the eavesdropper were established. Strongly secure index coding was considered in \cite{MAG}, where the eavesdropper does not have any side-information, and must not gain any information about  the message set. This involves sharing random keys with only the receivers, and encoding the  messages along with these keys. 
A necessary and sufficient condition for the existence of weakly-secure index codes for any ICP has been given, when the eavesdropper can access any subset of $t$ messages \cite{LVP}. Three cases have been identified, where random keys are not required to attain weak security. An equivalence between secure network coding and secure index coding was  established in \cite{OKVY}. Capacity region of secure index coding is characterised using an outer bound and an inner bound which uses composite coding scheme \cite{LVKS}. In this paper, we study the construction of weakly secure linear index codes for the TUICP.

The key results of this paper are summarized as follows.

\begin{itemize}
	\item We introduce the problem of two-sender unicast weakly secure index coding against an eavesdropper having some side-information. To the best of our knowledge, this is the first work investigating the two-sender unicast index coding problem with weak security.   
	\item We provide a general  construction of weakly secure linear two-sender index codes for any general TUICP, using weakly secure linear single-sender codes of its sub-problems. The constructed codes need not be optimal. However, this is the first work (to the best of our knowledge) where weakly secure linear codes are constructed using those of the sub-problems.
	\item Code-constructions are given for different classes of the TUICP, using codes of the sub-problems. This establishes upper bounds on the optimal codelengths of TUICPs.
	\item Some classes of the TUICP are identified, where the constructed codes are optimal. This result reduces the problem of finding two-sender optimal weakly secure linear index codes to the problem of finding the  corresponding optimal codes of the single-sender sub-problems.
\end{itemize}

\par The remainder of the paper is organized as follows. Section II establishes the problem setup and provides required definitions. Section III provides the main results of the paper. Conclusion of the paper is provided in Section IV. 

\section{Problem Formulation and Definitions}

In this section, we formulate the problem of two-sender unicast index coding with weak security and establish the required notations and definitions used in this paper. We use the notation employed in  \cite{CVBSR1}. 

\begin{figure*}[!htbp]
\begin{center}
	\includegraphics[width=41pc,height=15pc]{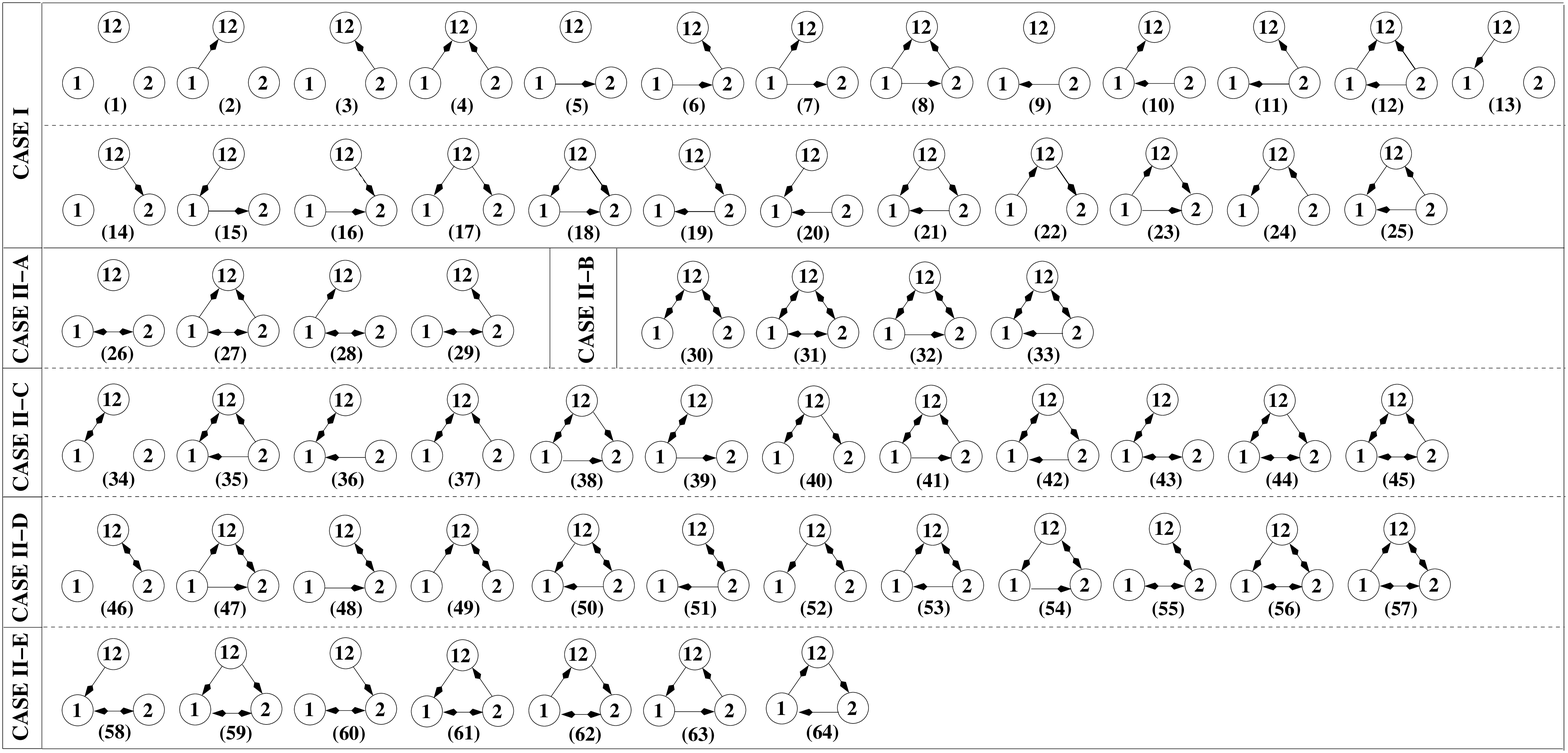}
	\caption{Enumeration of all the possible interactions between the sub-digraphs $\mathcal{D}_1$, $\mathcal{D}_2$, and $\mathcal{D}_{\{1,2\}}$, given by the digraph $\mathcal{H}$.}
	\label{interenum}
\end{center}
\hrule
\end{figure*}

The set $\{1,2,\cdots, n\}$ is denoted as $[n]$. $\mathbb{F}_q$ denotes a finite field of $q$ elements. Support of a vector ${\bf{u}} \in \mathbb{F}_q^m$, denoted as $supp({\bf{u}})$, is the set of co-ordinates where its components are non-zero. For any matrix ${\bf{G}}$, let $\langle {\bf{G}} \rangle$ denote the row space of ${\bf{G}}$. A unicast ICP is an ICP where every receiver requests a unique message. An instance of the two-sender unicast index coding problem (TUICP) consists of two senders collectively having the message set $\mathcal{M} =\{x_1,x_2,\cdots,x_m\}$, where $x_i \in \mathbb{F}_{q}$, $\forall i \in [m]$. Each message $x_i$ is a realization of a random variable $X_i$, $i \in [m]$. The random variables $\{X_i\}_{i \in [m]}$ are assumed to be independent and uniformly distributed over $\mathbb{F}_q$. The $j$th sender denoted by $\mathcal{S}_{j}$, $j \in \{1,2\}$, has a subset of $\mathcal{M}$ given by $\mathcal{M}_j$, such that $\mathcal{M}_1 \cup \mathcal{M}_2 = \mathcal{M}$.  Each sender knows the identity of all the messages present with the other. We assume a noiseless broadcast channel. Transmissions of each sender are orthogonal in time with those of the other. There are $m$ receivers, the $i$th receiver wanting $x_i$ and having $\mathcal{K}_i \subseteq \mathcal{M} \setminus \{x_i\}$, $i \in [m]$, as its side-information.  If $\mathcal{M}_{1}=\mathcal{M}$ and  $\mathcal{M}_{2}=\Phi$, or vice-versa, the TUICP reduces to the single-sender unicast ICP (SUICP).

For any unicast ICP, the knowledge of  side-information and demands of all the receivers is represented by the side-information digraph  $\mathcal{D}=(\mathcal{V}(\mathcal{D}),\mathcal{E}(\mathcal{D}))$, with the vertex set  $\mathcal{V}(\mathcal{D})=\{v_{1},\cdots,v_{m}\}$. The vertex $v_{i}$ represents $i$th receiver which wants $x_i$, $i \in [m]$. The edge set is given by $\mathcal{E}(\mathcal{D})=\{(v_{i},v_{j}): x_{j} \in \mathcal{K}_{i}, i,j \in [m]\}$. Any SUICP can also be represented using a fitting matrix \cite{PK}. It contains unknown entries denoted by $x$. Each row represents a receiver and  each column represents a message.

\begin{defn}[Fitting Matrix, \cite{PK}]
	An $m \times m$ matrix ${\bf{F}}_x$ is called the fitting matrix of an SUICP described by the side-information digraph $\mathcal{D}$, where the ($i,j$)th entry is given by `$x$' if  ${\bf{x}}_j \in \mathcal{K}_i$, $1$ if $i = j$, and $0$ otherwise, for all $i,j \in [m]$.
\end{defn}

The TUICP has been analyzed using three sub-digraphs (equivalently sub-problems) induced by three disjoint vertex sets of the side-information digraph \cite{CTLO},\cite{CVBSR1}. We recapitulate the same, as we employ the same approach. Let $\mathcal{P}_{1} = \mathcal{M}_{1} \setminus \mathcal{M}_{2}$ and $\mathcal{P}_{2} = \mathcal{M}_{2} \setminus \mathcal{M}_{1}$ denote the message sets available only with $\mathcal{S}_{1}$ and $\mathcal{S}_{2}$ respectively. The messages known to both the senders are given by   $\mathcal{P}_{\{1,2\}} = \mathcal{M}_{1} \cap \mathcal{M}_{2}$. Let $m_{\mathcal{T}}=|\mathcal{P}_{\mathcal{T}}|$, for any non-empty set $\mathcal{T} \subseteq  \{1,2\}$. Any  singleton set is represented without $\{\}$. Let $\mathcal{P}=(\mathcal{P}_{1},\mathcal{P}_{2},\mathcal{P}_{\{1,2\}})$. Any TUICP $\mathcal{I}$ can now be described in terms of the two tuple $(\mathcal{D},\mathcal{P})$, as $\mathcal{I}(\mathcal{D},\mathcal{P})$.

 Let $\mathcal{D}_{\mathcal{T}}$ be the sub-digraph of $\mathcal{D}$, induced by the vertices $\{v_j: x_j \in \mathcal{P}_{\mathcal{T}}, j \in [m]\}$, for any non-empty set $\mathcal{T} \subseteq \{1,2\}$. If there exists an edge from some vertex in $\mathcal{V}(\mathcal{D}_{\mathcal{T}})$ to some vertex in $\mathcal{V}(\mathcal{D}_{\mathcal{T}'})$, in the digraph $\mathcal{D}$, for non-empty  $\mathcal{T},\mathcal{T}' \subseteq \{1,2\}, \mathcal{T} \neq \mathcal{T}'$, an interaction is said to exist from $\mathcal{D}_{\mathcal{T}}$ to $\mathcal{D}_{\mathcal{T}'}$, denoted as $\mathcal{D}_{\mathcal{T}} \rightarrow \mathcal{D}_{\mathcal{T}'}$. The interaction $\mathcal{D}_{\mathcal{T}} \rightarrow \mathcal{D}_{\mathcal{T}'}$ is said to be fully-participated, if there are edges from every vertex in $\mathcal{V}(\mathcal{D}_{\mathcal{T}})$ to every vertex in $\mathcal{V}(\mathcal{D}_{\mathcal{T}'})$. Otherwise, it is said to be partially-participated. The TUICP is said to have  fully-participated interactions if all the existing interactions are fully-participated. Consider the digraph $\mathcal{H}$ with  $\mathcal{V}(\mathcal{H}) = \{1,2,\{1,2\}\}$ and $\mathcal{E}(\mathcal{H})=\{(\mathcal{T},\mathcal{T}') | \mathcal{D}_{\mathcal{T}} \rightarrow \mathcal{D}_{\mathcal{T}'}, \mathcal{T},\mathcal{T}' \in \mathcal{V}(\mathcal{H})\}$. We call the digraph $\mathcal{H}$ as the interaction digraph of the digraph $\mathcal{D}$, for a given $\mathcal{P}$. Bidirectional  edges in interaction digraph are denoted by a single edge with arrows at both the ends. There are 64 possibile interaction digraphs given in Figure \ref{interenum}, which were enlisted  and classified in \cite{CTLO}. The vertex representing the set $\{1,2\}$ is written as $12$ for brevity. 
 Note that all the possible interaction digraphs are classified into two cases broadly: Case I and Case II. Case I consists of all acyclic digraphs. Case II was further classified into five subcases as shown in Figure \ref{interenum}.
 
For a given instance of the TUICP, a two-sender index code consists of two sub-codes transmitted by the two senders respectively. Let  ${\bf{x}}^{(\mathcal{T})}$ be the concatenated message vector of messages in $\mathcal{P}_{\mathcal{T}}$, for non-empty $\mathcal{T} \subseteq  \{1,2\}$. The random vectors ${\bf{X}}^{\mathcal{T}}$ are defined similarly. An encoding function for  $\mathcal{S}_{j}$ is given by $\mathbb{E}_{j}:\mathbb{F}_{q}^{|\mathcal{M}_{j}| \times 1} \rightarrow   \mathbb{F}_{q}^{l_{j} \times 1}$, such that ${\bf{c}}_j=\mathbb{E}_{j}({\bf{x}}^{(j)},{\bf{x}}^{(\{1,2\})})$, where $l_j$ is the length of the codeword ${\bf{c}}_j$, $j \in \{1,2\}$. The $i$th receiver has a decoding function given by $\mathbb{D}_{i}:\mathbb{F}_{q}^{(l_{1}+l_{2}+|\mathcal{K}_{i}|) \times 1} \rightarrow   \mathbb{F}_{q}$, such that $x_i = \mathbb{D}_{i}({\bf{c}}_1,{\bf{c}}_2,\mathcal{K}_i)$, $i \in [m]$, i.e., it can decode $x_i$ using its side-information and the received codewords ${\bf{c}}_1$ and ${\bf{c}}_2$.
An index code for a two-sender problem is said to be linear, if both the encoding functions are linear transformations. 
In general, a linear code seen by any receiver can be written as in (\ref{genrec}), where ${\bf{x}}=(x_1, x_2, \cdots, x_m)^T \in \mathbb{F}^{m \times 1}_q$ is the concatenated message vector. The matrix ${\bf{G}}$ is an $\l \times m$ matrix, where $l = l_1+l_2$ is the codelength of the two-sender index code.
\begin{equation}
{\bf{G}}{\bf{x}}=
\left(
\begin{array}{c|c|c}
{\bf{G}}_{1} & {\bf{0}}   & {\bf{G}}_{\{1,2\}}^{1} \\ 
\hline
{\bf{0}} & {\bf{G}}_{2}  & {\bf{G}}_{\{1,2\}}^{2} \\
\end{array}
\right)
\left(
\begin{array}{c}
{\bf{x}}^{(1)}  \\
\hline
{\bf{x}}^{(2)}  \\
\hline
{\bf{x}}^{(\{1,2\})}
\end{array}
\right).
\label{genrec}
\end{equation}
${\bf{G}}$ is also called an encoding matrix of the problem. The constituent matrices have appropriate sizes as seen from the partition of ${\bf{G}}$ in (\ref{genrec}). $\mathcal{S}_i$ sends ${\bf{G}}_i{\bf{x}}^{(i)}+{\bf{G}}_{\{1,2\}}^{i}{\bf{x}}^{(\{1,2\})}$, $i \in \{1,2\}$. Depending on the two-sender problem, any of the matrices  ${\bf{G}}_{\{1,2\}}^{i}$, $i \in \{1,2\}$, can be ${\bf{0}}$. 

We now consider the notion of weak security in the TUICP setup. For any $\mathcal{B} \subset [m]$, let $x_{B}=\{x_i : i \in \mathcal{B}\}$. There is an eavesdropper having a subset of messages $x_{\mathcal{A}}$, with $\mathcal{A} \subsetneq [m]$, and having access to the codewords transmitted by the two senders. The set $\mathcal{A}$ is an element of the set $\mathcal{U} \subset 2^{[m]}$ ($2^{[m]}$ is the set of all subsets of [m]). The set $\mathcal{U}$ consists of sets of indices of possibly compromised messages. The senders know the set $\mathcal{U}$, but not the particular set $\mathcal{A}$. That is, the senders do not know the particular $x_{\mathcal{A}}$ accessed by the eavesdropper, unless $|\mathcal{U}|=1$.
 
A weakly secure linear index code  must ensure that all the receivers are able to obtain their demands, and also that the eavesdropper does not gain additional information about each message not present in $x_{\mathcal{A}}$. That is, assuming the encoding functions $\{\mathbb{E}_j\}_{j \in \{1,2\}}$ to be linear transformations,  $H(X_i|\mathbb{E}_1({\bf{X}}^{(1)},{\bf{X}}^{(\{1,2\})}),\mathbb{E}_2({\bf{X}}^{(2)},{\bf{X}}^{(\{1,2\})}),{\bf{X}}_{\mathcal{A}})=H(X_i)$, for all $i \in \mathcal{A}^c$. $H(X)$ and  $H(W|{\bf{X}},{\bf{Y}},{\bf{Z}})$ denote shannon entropy and conditional entropy. For a TUICP $\mathcal{I}(\mathcal{D},\mathcal{P})$, consider $\mathcal{A}=\mathcal{A}_1 \cup \mathcal{A}_2 \cup \mathcal{A}_{\{1,2\}}$, such that  $x_{\mathcal{A}_{\mathcal{T}}} = x_{\mathcal{A}} \cap x_{\mathcal{P}_{\mathcal{T}}}$, for non-empty $\mathcal{T} \subseteq \{1,2\}$. That is, $x_{\mathcal{A}_{\mathcal{T}}}$ is the side-information of the eavesdropper present in $\mathcal{P}_{\mathcal{T}}$. We denote this instance of the weakly secure TUICP as $\mathcal{I}(\mathcal{D},\mathcal{P},\mathcal{A})$. Similarly, an SUICP with eavesdropper's side-information given by $\mathcal{A}$ is denoted as $\mathcal{I}(\mathcal{D},\mathcal{A})$. The optimal (minimum) length (over a given field) of a weakly secure linear index code for the problem $\mathcal{I}(\mathcal{D},\mathcal{P},\mathcal{A})$ is denoted as $l^{*}(\mathcal{I}(\mathcal{D},\mathcal{P},\mathcal{A}))$. For $\mathcal{I}(\mathcal{D},\mathcal{A})$, it is denoted as $l^{*}(\mathcal{I}(\mathcal{D},\mathcal{A}))$. When there is no ambiguity and the given problem is understood, we write $l^{*}$ instead of $l^{*}(\mathcal{I}(\mathcal{D},\mathcal{P},\mathcal{A}))$ for brevity. Similarly, the optimal codelength  of the subproblem $l^{*}(\mathcal{I}(\mathcal{D}_{\mathcal{T}},\mathcal{A}_{\mathcal{T}}))$ is denoted as $l^{*}_{\mathcal{T}}$, for non-empty $\mathcal{T} \subseteq \{1,2\}$.
 
 \par The following notations are required for the construction of a two-sender code from single-sender codes.
 Any vector ${\bf{u}}_x$ with a subscript $x$ consists of $1$'s, $0$'s and $x$'s. An $x$ denotes an unknown value, which can be replaced by any element from the given field. A vector ${\bf{u}}$ is said to complete a vector ${\bf{u}}_x$ and denote it as ${\bf{u}} \approx {\bf{u}}_x$, if  ${\bf{u}}$ can be obtained by replacing the $x$'s in ${\bf{u}}_x$ with known values from the given field. A vector ${\bf{e}}_{x_{\mathcal{A}}}^{(i,m)}$ is a $1 \times m$ vector with $1$ in the $i$th co-ordinate, $x$'s in the co-ordinates given by $\mathcal{A}$, and $0$'s in the rest of the co-ordinates. The `$x$'s in ${\bf{e}}_{x_{\mathcal{A}}}^{(i,m)}$ denote unknown values.   Let ${\bf{c}}_1$ and ${\bf{c}}_2$ be  two codewords. The notation ${\bf{c}}_1 + {\bf{c}}_2$ denotes the symbol-wise addition of ${\bf{c}}_1$ and ${\bf{c}}_2$ after zero-padding the shorter message at the least significant positions to match the length of the longer one. The notation ${\bf{c}}[a:b]$ denotes the vector obtained by choosing the code symbols from co-ordinate $a$ to $b$, starting from the most significant position. Similar notation holds for codes $\mathcal{C}_1$ and $\mathcal{C}_2$. 
 
 We now illustrate the definitions and notations introduced in this section with a running example.
  
\begin{exmp}
	Consider the TUICP with $m=4$ messages, where the $i$th receiver demands $x_{i} \in \mathbb{F}_2$, $i \in \{1,2,3,4\}$. Sender $\mathcal{S}_1$ has $\mathcal{M}_1=\{x_1,x_2,x_3\}$. $\mathcal{S}_2$ has $\mathcal{M}_2=\{x_3,x_4\}$. Hence, $\mathcal{P}_1=\{x_1,x_2\}$, $\mathcal{P}_2=x_4$, and $\mathcal{P}_{\{1,2\}}=x_3$. The side-information of each receiver is given as follows: $\mathcal{K}_{1}=x_{2}$, $\mathcal{K}_{2}=x_{1}$, $\mathcal{K}_{3}=\{x_{2},x_{4}\}$, $\mathcal{K}_{4}=\{x_{2},x_3\}$. The side-information digraph  $\mathcal{D}$ and the corresponding interaction digraph  $\mathcal{H}$ are shown in Figure \ref{fig2}.  The vertex-induced sub-digraphs $\mathcal{D}_1$, $\mathcal{D}_2$, and $\mathcal{D}_{\{1,2\}}$ are also shown. Note that only the interactions $\mathcal{D}_{\{1,2\}} \rightarrow \mathcal{D}_{2}$ and $\mathcal{D}_{2} \rightarrow \mathcal{D}_{\{1,2\}}$ are fully-participated. The interaction digraph shown in Figure \ref{fig2} is $\mathcal{H}_{50}$ as given in Figure \ref{interenum}. Let the eavesdropper have message set $\{x_3,x_4\}$. Hence, $\mathcal{A}=\{3,4\}$. and $\mathcal{A}_1=\Phi,\mathcal{A}_2=4,\mathcal{A}_{\{1,2\}}=3$. Consider the code $\mathcal{C}=(x_1+x_2,x_3+x_4)$. It can be easily seen that the eavesdropper cannot decode each of $x_1$ and $x_2$. In Lemma 4.3, \cite{DSC2}, it has been shown that under the assumption of linear encoding, the eavesdropper is not able to gain any information about a particular message iff it is not able to decode it. Hence, the code is weakly secure. Note also that the eavesdropper has some joint information about $x_1$ and $x_2$, even though it is not able to decode either. The encoding matrix ${\bf{G}}$ is shown below.
	\begin{gather*}
	{\bf{G}}=
	\left(
	\begin{array}{cc|c|c}
	1 & 1 & 0 & 0 \\ 
	\hline
	0 & 0 & 1 & 1     
	\end{array}
	\right),
	{\bf{G}}_1=
	\left(
	\begin{array}{cc}
	1 & 1 \\ 
	\end{array}
	\right),	
	{\bf{G}}_2=
	\left(
	\begin{array}{c}
	1  
	\end{array}
	\right),\\	
		{\bf{G}}_{\{1,2\}}^1=
		\left(
		\begin{array}{c}
		0
		\end{array}
		\right),
		{\bf{G}}_{\{1,2\}}^2=
		\left(
		\begin{array}{c}
		1
		\end{array}
		\right).
	\end{gather*}
	\label{exmp1}
\end{exmp}

	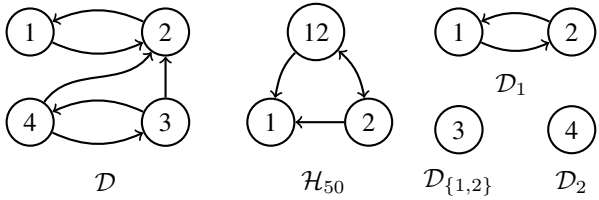
\begin{figure}
		\centering
		\begin{tikzpicture}
		[place/.style={circle,draw=black!100,thick}]
		\node at (-7,0) [place] (1c) {1};
     	\node at (-7,-1.2) [place] (4c) {4};
	 	\node at (-5.2,0) [place] (2c) {2};
	 	\node at (-5.2,-1.2) [place] (3c) {3};	
		\draw (-6,-2) node {$\mathcal{D}$};
     	\draw [thick,->] (1c) to [out=-25,in=-155] (2c);
		\draw [thick,->] (2c) to [out=155,in=25] (1c);
		\draw [thick,->] (3c) to [out=90,in=-90] (2c);
	    \draw [thick,->] (4c) to [out=55,in=-125] (2c);	
        \draw [thick,->] (4c) to [out=-25,in=-155] (3c);
        \draw [thick,->] (3c) to [out=155,in=25] (4c);
		\node at (-3.8,-1.2) [place] (1h) {1};
		\node at (-3.2,0)  [place] (12h) {12};
		\node at (-2.5,-1.2)  [place] (2h) {2};
		\draw [thick,->] (12h) to [out=-135,in=80] (1h);
	   	\draw [thick,<->] (2h) to [out=100,in=-40] (12h);
	   	\draw [thick,->] (2h) to [out=-180,in=0] (1h);
	   	\draw (-3.1,-2) node {$\mathcal{H}_{50}$};
	   	  	\node at (-1.3,-1.3) [place] (3'c) {3};
	   	   	\node at (0.2,-1.3) [place] (4'c) {4}; 
	 		\node at (-1.3,0) [place] (1'c) {1};
	 		\node at (0.2,0) [place] (2'c) {2};  
	     	\draw [thick,->] (1'c) to [out=-25,in=-155] (2'c);
	     	\draw [thick,->] (2'c) to [out=155,in=25] (1'c); 		
	 		\draw (-1.3,-2) node {$\mathcal{D}_{\{1,2\}}$};	
		\draw (0.2,-2) node {$\mathcal{D}_2$};
		\draw (-0.6,-.7) node {$\mathcal{D}_1$};
		\end{tikzpicture}
		\caption{Side-information digraph, interaction digraph, and sub-digraphs of Example \ref{exmp1}}
		\label{fig2}
	\end{figure}
	
\section{Main Results}
In this section, we use weakly secure linear codes of the single-sender sub-problems to obtain a weakly secure linear code for the two-sender problem. The code-constructions given in this section assume the existence of weakly secure linear index codes for all the related sub-problems.

We recapitulate the following lemma given as a special case of Lemma 4.3 in \cite{DSC2}, which gives the necessary and sufficient conditions for a linear index code to be weakly secure. 
\begin{lem}[Lemma 4.3, \cite{DSC2}]
	Let the sender transmit ${\bf{G}}{\bf{x}}$, where ${\bf{G}} \in \mathbb{F}_q^{l \times m}$ and ${\bf{x}} \in \mathbb{F}_q^{m \times 1}$. For each $i \in \mathcal{A}^c$, an eavesdropper having $x_{\mathcal{A}}$, and receiving ${\bf{G}}{\bf{x}}$, has no information about $x_i$ iff the following holds.
	\begin{gather}
	\forall {\bf{u}} : supp({\bf{u}}) \subseteq \mathcal{A},  ~{\bf{u}} + {\bf{e}}_i \notin \langle {\bf{G}} \rangle.
	\label{nesuse}
	\end{gather}
	\label{lemdau} 
\end{lem}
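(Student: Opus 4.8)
The plan is to route the argument through the subspace $V := \langle \mathbf{G}\rangle + \mathrm{span}\{\mathbf{e}_j : j \in \mathcal{A}\} \subseteq \mathbb{F}_q^{1\times m}$, and to show that the zero-information condition and the algebraic condition (\ref{nesuse}) are each equivalent to $\mathbf{e}_i \notin V$. Collecting the eavesdropper's observations into $\mathbf{Y} := (\mathbf{G}\mathbf{X}, \mathbf{X}_{\mathcal{A}})$, I would write $\mathbf{Y} = \mathbf{L}\mathbf{X}$, where $\mathbf{L}$ is the matrix obtained by stacking $\mathbf{G}$ on top of the standard-basis rows $\mathbf{e}_j$, $j\in\mathcal{A}$; by construction the row space of $\mathbf{L}$ is exactly $V$. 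It then suffices to establish (i) $H(X_i\mid\mathbf{Y}) = H(X_i)$ iff $\mathbf{e}_i\notin V$, and (ii) $\mathbf{e}_i\notin V$ iff (\ref{nesuse}) holds.

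For step (i) I would exploit that $\mathbf{X}$ is uniform on $\mathbb{F}_q^{m\times 1}$ with independent coordinates, so for any matrix $\mathbf{M}$ the image $\mathbf{M}\mathbf{X}$ is uniform on $\mathrm{range}(\mathbf{M})$, every nonempty fibre of a linear map having the same cardinality $q^{m-\mathrm{rank}(\mathbf{M})}$. Condition on $\mathbf{Y} = \mathbf{L}\mathbf{x}_0$: this restricts $\mathbf{X}$ to the coset $\mathbf{x}_0 + \ker(\mathbf{L})$, on which $X_i = \mathbf{e}_i\mathbf{x}_0 + \mathbf{e}_i\mathbf{k}$ for $\mathbf{k}\in\ker(\mathbf{L})$. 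Since $V = (\ker\mathbf{L})^{\perp}$, the hypothesis $\mathbf{e}_i\notin V$ says exactly that $\mathbf{k}\mapsto\mathbf{e}_i\mathbf{k}$ is a nonzero linear functional on $\ker(\mathbf{L})$, hence surjective onto $\mathbb{F}_q$ with equal-sized fibres; therefore, conditioned on any value of $\mathbf{Y}$, the variable $X_i$ is uniform on $\mathbb{F}_q$, so $X_i$ is independent of $\mathbf{Y}$ and $H(X_i\mid\mathbf{Y}) = \log q = H(X_i)$. Conversely, if $\mathbf{e}_i\in V$ then $\mathbf{e}_i = \mathbf{w}\mathbf{L}$ for some row vector $\mathbf{w}$, whence $X_i = \mathbf{w}\mathbf{Y}$ is a deterministic function of the observation and $H(X_i\mid\mathbf{Y}) = 0 \neq H(X_i)$. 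This is precisely the ``no information about $x_i$ iff $x_i$ is not linearly recoverable'' equivalence cited from \cite{DSC2}, made explicit for the uniform message model.

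Step (ii) is a one-line rewriting of subspace membership: $\mathbf{e}_i\in V$ means $\mathbf{e}_i = \mathbf{g} + \mathbf{v}$ with $\mathbf{g}\in\langle\mathbf{G}\rangle$ and $supp(\mathbf{v})\subseteq\mathcal{A}$, equivalently $\mathbf{e}_i + \mathbf{u}\in\langle\mathbf{G}\rangle$ for $\mathbf{u} := -\mathbf{v}$, which again satisfies $supp(\mathbf{u})\subseteq\mathcal{A}$. Negating, $\mathbf{e}_i\notin V$ is exactly the assertion that $\mathbf{u}+\mathbf{e}_i\notin\langle\mathbf{G}\rangle$ for every $\mathbf{u}$ with $supp(\mathbf{u})\subseteq\mathcal{A}$, i.e. (\ref{nesuse}). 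Chaining (i) and (ii) proves the lemma for every $i\in\mathcal{A}^c$.

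The step I expect to be the main obstacle is (i): one must argue at the level of Shannon entropy, not merely of linear decodability, that $\mathbf{e}_i\notin V$ forces the posterior distribution of $X_i$ to stay uniform. The coset/fibre-counting argument above handles this cleanly, but it relies essentially on the messages being uniform and independent over $\mathbb{F}_q$; I would therefore flag that distributional assumption at the outset, since without it the equivalence between ``zero leakage'' and ``non-recoverability'' need not hold.
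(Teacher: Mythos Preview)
Your proof is correct. Note, however, that the paper does not actually prove this lemma: it is quoted verbatim as a special case of Lemma~4.3 in \cite{DSC2} and is used as a black box, so there is no in-paper argument to compare against. Your route---packaging the eavesdropper's observations as $\mathbf{L}\mathbf{X}$ with row space $V=\langle\mathbf{G}\rangle+\mathrm{span}\{\mathbf{e}_j:j\in\mathcal{A}\}$, then using rank--nullity/duality to see that $\mathbf{e}_i\notin V$ forces the coset-conditional law of $X_i$ to remain uniform---is the standard argument and is essentially how the original reference \cite{DSC2} proceeds. Your flag about the uniform i.i.d.\ assumption on the $X_i$'s is exactly right and matches the paper's stated model; step~(ii) is indeed just unwinding definitions. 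There are no gaps.
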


In Lemma \ref{nesusec}, we state a necessary and sufficient condition equivalent to that given in Lemma \ref{lemdau}.

\begin{lem}
	Let an index code be given by the encoding matrix ${\bf{G}} \in \mathbb{F}_q^{l \times m}$, and the eavesdropper have $x_{\mathcal{A}}$, and have access to the index code. For each $i \in \mathcal{A}^c$, an eavesdropper receiving ${\bf{G}}{\bf{x}}$, has no information about $x_i$ iff the following holds.
	\begin{gather}
	\nexists ~~{\bf{d}} \in \mathbb{F}_q^{1 \times l} : {\bf{d}}{\bf{G}} \approx {\bf{e}}_{x_{\mathcal{A}}}^{(i,m)}.
	\label{nesuse1}
	\end{gather}
	\label{nesusec}
\end{lem}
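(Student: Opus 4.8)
The plan is to show that conditions (\ref{nesuse}) and (\ref{nesuse1}) forbid exactly the same family of vectors from lying in the row space of ${\bf{G}}$, so that they are logically equivalent; the biconditional of Lemma \ref{nesusec} then follows at once from Lemma \ref{lemdau}. First I would unwind the row-space notation: by definition $\langle {\bf{G}} \rangle = \{{\bf{d}}{\bf{G}} : {\bf{d}} \in \mathbb{F}_q^{1 \times l}\}$, so the membership ${\bf{w}} \in \langle {\bf{G}} \rangle$ is the same as the existence of some ${\bf{d}} \in \mathbb{F}_q^{1 \times l}$ with ${\bf{d}}{\bf{G}} = {\bf{w}}$. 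Hence (\ref{nesuse}) is equivalent to the statement that there exist no ${\bf{d}} \in \mathbb{F}_q^{1 \times l}$ and no ${\bf{u}}$ with $supp({\bf{u}}) \subseteq \mathcal{A}$ such that ${\bf{d}}{\bf{G}} = {\bf{u}} + {\bf{e}}_i$.

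The core step is to identify the family $\{{\bf{u}} + {\bf{e}}_i : supp({\bf{u}}) \subseteq \mathcal{A}\}$ with the family of all vectors that complete ${\bf{e}}_{x_{\mathcal{A}}}^{(i,m)}$. Here I would use the hypothesis $i \in \mathcal{A}^c$, so that $i \notin \mathcal{A}$ and the supports of ${\bf{e}}_i$ and of any admissible ${\bf{u}}$ are disjoint. Consequently ${\bf{u}} + {\bf{e}}_i$ is precisely a length-$m$ vector carrying a $1$ in coordinate $i$, arbitrary field values in the coordinates indexed by $\mathcal{A}$, and $0$ in every remaining coordinate, which is exactly the description of a vector ${\bf{v}}$ with ${\bf{v}} \approx {\bf{e}}_{x_{\mathcal{A}}}^{(i,m)}$. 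The correspondence is onto in both directions: each assignment of field values to the $x$-entries of ${\bf{e}}_{x_{\mathcal{A}}}^{(i,m)}$ produces such a ${\bf{u}}$, and conversely every admissible ${\bf{u}}$ arises this way.

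Substituting this identification into the reformulation of (\ref{nesuse}) yields that there is no ${\bf{d}} \in \mathbb{F}_q^{1 \times l}$ with ${\bf{d}}{\bf{G}}$ equal to some completion of ${\bf{e}}_{x_{\mathcal{A}}}^{(i,m)}$, that is, $\nexists {\bf{d}} : {\bf{d}}{\bf{G}} \approx {\bf{e}}_{x_{\mathcal{A}}}^{(i,m)}$, which is (\ref{nesuse1}); running the same chain in reverse shows (\ref{nesuse1}) implies (\ref{nesuse}). Feeding this equivalence into Lemma \ref{lemdau} gives the claimed biconditional. The only delicate point, and the closest thing to an obstacle, is making the disjoint-support argument fully rigorous so that the $1$ in coordinate $i$ can never be overwritten or cancelled by an entry of ${\bf{u}}$; this is exactly where $i \in \mathcal{A}^c$ is essential, since it is what forces the two families to coincide rather than merely overlap, and it is what keeps coordinate $i$ pinned to $1$ throughout the identification.
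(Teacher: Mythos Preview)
Your proposal is correct and follows essentially the same route as the paper: both arguments reduce to observing that $\langle{\bf G}\rangle=\{{\bf d}{\bf G}\}$ and that, by the definition of ${\bf e}_{x_{\mathcal A}}^{(i,m)}$, the set $\{{\bf u}+{\bf e}_i:\ supp({\bf u})\subseteq\mathcal A\}$ coincides with the set of completions of ${\bf e}_{x_{\mathcal A}}^{(i,m)}$, whence (\ref{nesuse}) and (\ref{nesuse1}) are equivalent and Lemma~\ref{lemdau} finishes the proof. Your write-up is simply more explicit, in particular in spelling out why $i\in\mathcal A^c$ guarantees the disjoint-support identification; the paper leaves this implicit in its appeal to the definition of ${\bf e}_{x_{\mathcal A}}^{(i,m)}$.
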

\begin{proof}
	A vector ${\bf{u}}+{\bf{e}}_i$ is in the the row space ${\bf{G}}$, iff there exists a ${\bf{d}}$, such that ${\bf{d}}{\bf{G}}={\bf{u}}+{\bf{e}}_i$. From, the definition of ${\bf{e}}_{x_{\mathcal{A}}}^{(i,m)}$, the condition given in (\ref{nesuse}), is hence equivalent to that given in (\ref{nesuse1}). Hence the result.
\end{proof}

The condition (\ref{nesuse}) is  equivalent to the condition (\ref{nesuse1}).  
The proofs of Lemmas \ref{lem3suicp}, \ref{lem4}, and \ref{a12p12}, and Theorems \ref{thm1} and \ref{thm3} become easier and transparent when condition (\ref{nesuse1}) is employed instead of condition (\ref{nesuse}).
Lemmas \ref{lemdau}, and  \ref{nesusec} are derived for the single-sender ICP. However, they also hold for the TUICP, as the overall transmitted code can be seen in the form of ${\bf{G}}{\bf{x}}$, as in ($\ref{genrec}$). We illustrate Lemma \ref{nesusec} with an example.

\begin{exmp}[Example \ref{exmp1} continued]
Note that for the encoding matrix given in Example \ref{exmp1}, the following are the possible values of ${\bf{d}}{\bf{G}}$: $(1~1~0~0),(0~0~1~1),(1~1~1~1),(0~0~0~0)$. For $\mathcal{A}=\{3,4\}$, $i$ takes  values from  $\mathcal{A}^c=\{1,2\}$. The possible values of ${\bf{e}}^{(i,4)}_{x_{\mathcal{A}}}$ are $(1~0~x~x)$ and $(0~1~x~x)$. With the binary field, the possible values for any vector ${\bf{v}}$ such  that ${\bf{v}} \approx {\bf{e}}^{(1,4)}_{x_{\mathcal{A}}}$ are $(1~0~0~0)$, $(1~0~0~1)$, $(1~0~1~0)$, $(1~0~1~1)$. The possible values for any vector ${\bf{v}}$ such  that ${\bf{v}} \approx {\bf{e}}^{(2,4)}_{x_{\mathcal{A}}}$ are $(0~1~0~0)$, $(0~1~0~1)$, $(0~1~1~0)$, $(0~1~1~1)$. Note that none of the possible values of ${\bf{d}}{\bf{G}}$ are same as any of the eight  vectors which complete ${\bf{e}}^{(i,4)}_{x_{\mathcal{A}}}$, for $i \in \{1,2\}$. Hence, the code obtained is weakly secure.   	
\end{exmp}

The following result shows that a linear weakly secure index code for any SUICP can be obtained by code concatenation of corresponding codes of its subproblems.
 
\begin{lem}
	Consider an SUICP $\mathcal{I}(\mathcal{D},\mathcal{A})$ with the fitting matrix ${\bf{F}}_x$ of size $m \times m$ as given in (\ref{eqlem3suicp}), where ${\bf{F}}_x^{(i)}$ is the $m_i \times m_i$ fitting matrix of an SUICP  $\mathcal{I}(\mathcal{D}_i,\mathcal{A}_i)$, $i \in \{1,2\}$, such that $m_1+m_2=m$. Let the eavesdropper have the message index set $\mathcal{A}=\mathcal{A}_1 \cup \mathcal{A}_2$, with $\mathcal{A}_1 \subset [m_1]$, and $\mathcal{A}_2 \subset [m] \setminus [m_1]$. 
	\begin{equation}
	{\bf{F}}_x=
	\left(
	\begin{array}{c|c}
	{\bf{F}}_{x}^{(1)} & {\bf{B}}_x^{(1)} \\
	\hline
	{\bf{B}}_x^{(2)} & {\bf{F}}_x^{(2)} \\
	\end{array}
	\right)
	\label{eqlem3suicp}
	\end{equation}
	If $l^*_i$ is the optimal codelength, and ${\bf{G}}_{i}$ a corresponding encoding matrix for the problem  $\mathcal{I}(\mathcal{D}_i,\mathcal{A}_i)$, $i \in \{1,2\}$, then the optimal codelength $l^*$ for the problem  $\mathcal{I}(\mathcal{D},\mathcal{A})$ is upper bounded by $l^*_1 + l^*_2$, and a corresponding encoding matrix is given by ${\bf{G}}$ as in (\ref{eq1lem3suicp}).  
	\begin{equation}
	{\bf{G}}=
	\left(
	\begin{array}{c|c}
	{\bf{G}}_{1} & {\bf{0}} \\
	\hline
	{\bf{0}} & {\bf{G}}_{2} \\
	\end{array}
	\right)
	\label{eq1lem3suicp}
	\end{equation}
	\label{lem3suicp}
\end{lem}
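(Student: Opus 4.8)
The plan is to verify the two defining properties of a valid weakly secure linear index code for $\mathcal{I}(\mathcal{D},\mathcal{A})$ directly from the block-diagonal form of ${\bf{G}}$ in (\ref{eq1lem3suicp}): (i) every receiver decodes its demand, and (ii) the eavesdropper holding $x_{\mathcal{A}}$ gains nothing about each $x_i$ with $i \in \mathcal{A}^c$. The codelength bound then follows at once, since ${\bf{G}}$ has exactly $l_1^*+l_2^*$ rows. The single structural observation driving both parts is that, because the $1$'s of a fitting matrix occur only on its diagonal, the off-diagonal blocks ${\bf{B}}_x^{(1)}$ and ${\bf{B}}_x^{(2)}$ contain only $x$'s and $0$'s. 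Consequently every vector in $\langle {\bf{G}} \rangle$ splits as $({\bf{a}},{\bf{b}})$ with ${\bf{a}} \in \langle {\bf{G}}_1 \rangle$ (length $m_1$) and ${\bf{b}} \in \langle {\bf{G}}_2 \rangle$ (length $m_2$), and the relation $\approx$ respects this split coordinate-wise.

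For decodability I would treat a receiver $i$ in the first block, $i \in [m_1]$, the second block being symmetric. Its row of ${\bf{F}}_x$ is $({\bf{f}}_i^{(1)}, {\bf{b}}_i^{(1)})$, where ${\bf{f}}_i^{(1)}$ is the $i$th row of ${\bf{F}}_x^{(1)}$ and ${\bf{b}}_i^{(1)}$ the $i$th row of ${\bf{B}}_x^{(1)}$. Since ${\bf{G}}_1$ is a valid code for $\mathcal{I}(\mathcal{D}_1,\mathcal{A}_1)$, there is a ${\bf{d}}^{(1)}$ with ${\bf{d}}^{(1)}{\bf{G}}_1 \approx {\bf{f}}_i^{(1)}$. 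Taking ${\bf{d}}=({\bf{d}}^{(1)},{\bf{0}})$ gives ${\bf{d}}{\bf{G}}=({\bf{d}}^{(1)}{\bf{G}}_1,{\bf{0}})$, which completes $({\bf{f}}_i^{(1)}, {\bf{b}}_i^{(1)})$: the first block matches by construction, and ${\bf{0}} \approx {\bf{b}}_i^{(1)}$ precisely because ${\bf{b}}_i^{(1)}$ has no fixed $1$ entries. Hence receiver $i$ recovers $x_i$.

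For weak security I would invoke Lemma \ref{nesusec} and argue by contradiction. Fix $i \in \mathcal{A}^c$, say $i \in [m_1]$ (so $i \in [m_1] \setminus \mathcal{A}_1 = \mathcal{A}_1^c$), and suppose some ${\bf{d}}=({\bf{d}}^{(1)},{\bf{d}}^{(2)})$ satisfies ${\bf{d}}{\bf{G}} \approx {\bf{e}}_{x_{\mathcal{A}}}^{(i,m)}$. Restricted to the first block, ${\bf{e}}_{x_{\mathcal{A}}}^{(i,m)}$ carries a $1$ in coordinate $i$, $x$'s on $\mathcal{A}_1$, and $0$'s elsewhere, so it restricts to ${\bf{e}}_{x_{\mathcal{A}_1}}^{(i,m_1)}$, forcing ${\bf{d}}^{(1)}{\bf{G}}_1 \approx {\bf{e}}_{x_{\mathcal{A}_1}}^{(i,m_1)}$. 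But ${\bf{G}}_1$ is weakly secure for $\mathcal{I}(\mathcal{D}_1,\mathcal{A}_1)$ and $i \in \mathcal{A}_1^c$, so Lemma \ref{nesusec} forbids exactly such a ${\bf{d}}^{(1)}$, a contradiction. The case with $i$ in the second block is identical with the roles of the blocks swapped.

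I expect the only delicate point to be keeping the bookkeeping of $\approx$ honest across the block split: one must check that the $1$ of ${\bf{e}}_{x_{\mathcal{A}}}^{(i,m)}$ always lands inside the block containing $i$, that the cross-block coordinates are exactly the $x$'s of $\mathcal{A}_2$ (resp. $\mathcal{A}_1$) and hence rendered harmless by the completion freedom, and that the off-diagonal blocks never contribute a fixed $1$ that the all-zero partner block would fail to complete. Once this structural fact is pinned down, both the decodability and the security reductions to the two sub-problems are routine, and the bound $l^* \le l_1^* + l_2^*$ is immediate.
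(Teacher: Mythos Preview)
Your proposal is correct and takes essentially the same route as the paper: both prove weak security by contradiction through Lemma~\ref{nesusec}, observing that a putative ${\bf d}=({\bf d}_1|{\bf d}_2)$ with $({\bf d}_1{\bf G}_1|{\bf d}_2{\bf G}_2)\approx {\bf e}_{x_{\mathcal{A}}}^{(j,m)}$ would force ${\bf d}_i{\bf G}_i\approx {\bf e}_{x_{\mathcal{A}_i}}^{(j_i,m_i)}$ for the block containing $j$, contradicting the weak security of that sub-code. You additionally spell out the decodability step using the fitting-matrix completion (the paper leaves this implicit), but the core argument is identical.
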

\begin{proof}
	We prove the lemma by obtaining a  contradiction. 	From Lemma \ref{nesusec}, we know that 
	\begin{equation}
	\nexists ~~{\bf{d}}_i \in \mathbb{F}_q^{1 \times l_i} : {\bf{d}}_i{\bf{G}}_{i} \approx {\bf{e}}_{x_{\mathcal{A}_i}}^{(j_i,m_i)}, \forall j_i \in [m_i], \forall i \in \{1,2\}.
	\label{eq2lem3suicp} 
	\end{equation}	
	Using Lemma \ref{nesusec}, if ${\bf{G}}$ is not resulting in a weakly secure code, then we have
	\begin{equation*} 
	\exists ~~({\bf{d}}_1|{\bf{d}}_2) \in \mathbb{F}_q^{1 \times (l_1+l_2)} : ({\bf{d}}_1{\bf{G}}_{1}|{\bf{d}}_2{\bf{G}}_{2}) \approx {\bf{e}}_{x_{\mathcal{A}}}^{(j,m)}.
	\end{equation*}
	This implies that (\ref{eq2lem3suicp}) is violated for atleast one of $i \in \{1,2\}$, which contradicts that atleast one ${\bf{G}}_{i}$ will result in a weakly secure code for the problem $\mathcal{I}(\mathcal{D}_i,\mathcal{A}_i)$.
\end{proof}

\subsection{A General Code-construction for any TUICP}

We now provide a general construction of weakly secure linear codes for any TUICP in terms of weakly secure linear codes of its single-sender sub-problems. A weakly secure linear code need not exist for a given SUICP. Hence, the construction is valid only if such codes exist. The two-sender problem can also have partially-participated interactions.

\begin{thm}
Consider any TUICP $\mathcal{I}(\mathcal{D},\mathcal{P},\mathcal{A})$ with any type of interactions. Let $\mathcal{P}_{\{1,2\}}$ be partitioned into two disjoint sets: $\mathcal{P}_{\{1,2\}}^{(1)}$ and $\mathcal{P}_{\{1,2\}}^{(2)}$. The induced sub-digraph of $\mathcal{D}$ induced by the message set  $\tilde{\mathcal{P}}_{i} =\mathcal{P}_{\{1,2\}}^{(i)} \cup \mathcal{P}_{i}$, is denoted by $\tilde{\mathcal{D}}_{i}$, $i \in \{1,2\}$. Let the eavesdropper have side-information $x_{\mathcal{A}}=x_{\tilde{\mathcal{A}}_1} \cup x_{\tilde{\mathcal{A}}_2}$, where $x_{\tilde{\mathcal{A}_i}}=x_{\mathcal{A}} \cap  \tilde{\mathcal{P}}_{i}$, $i \in \{1,2\}$. If there exists a weakly secure linear index code $\tilde{\mathcal{C}}_i$ for each SUICP $\mathcal{I}(\tilde{\mathcal{D}}_i,\tilde{{\mathcal{A}}}_i)$, $i \in \{1,2\}$, then the code $\mathcal{C}=(\tilde{\mathcal{C}}_1,\tilde{\mathcal{C}}_2)$ is a weakly secure linear index code for the TUICP $\mathcal{I}(\mathcal{D},\mathcal{P},\mathcal{A})$.
\label{thm1}
\end{thm}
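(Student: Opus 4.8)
The plan is to exhibit the concatenated code $\mathcal{C}=(\tilde{\mathcal{C}}_1,\tilde{\mathcal{C}}_2)$ as a single linear map $\mathbf{G}\mathbf{x}$ of the two-sender form (\ref{genrec}), and then verify in turn that it is transmittable by the two senders, decodable by every receiver, and weakly secure; the security part reduces almost verbatim to Lemma \ref{lem3suicp}. Let $\tilde{\mathbf{G}}_i$ denote an encoding matrix realizing $\tilde{\mathcal{C}}_i$ on the message set $\tilde{\mathcal{P}}_i$. Since $\mathcal{P}_{\{1,2\}}^{(i)}\subseteq\mathcal{P}_{\{1,2\}}\subseteq\mathcal{M}_i$, sender $\mathcal{S}_i$ knows every message of $\tilde{\mathcal{P}}_i=\mathcal{P}_{\{1,2\}}^{(i)}\cup\mathcal{P}_i$ and can compute $\tilde{\mathcal{C}}_i$; splitting $\tilde{\mathbf{G}}_i$ into its columns on $\mathcal{P}_i$ and on $\mathcal{P}_{\{1,2\}}^{(i)}$ identifies it with the blocks $\mathbf{G}_i$ and $\mathbf{G}_{\{1,2\}}^i$ of (\ref{genrec}). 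Because $\tilde{\mathcal{P}}_1$ and $\tilde{\mathcal{P}}_2$ are disjoint and cover $\mathcal{M}$, ordering the columns as $(\mathcal{P}_1,\mathcal{P}_2,\mathcal{P}_{\{1,2\}}^{(1)},\mathcal{P}_{\{1,2\}}^{(2)})$ (a mere permutation, affecting neither decodability nor security) turns the overall matrix into the block-diagonal form of (\ref{eq1lem3suicp}).

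For decodability, consider a receiver demanding $x_j$ and suppose $x_j\in\tilde{\mathcal{P}}_1$ (the other case is symmetric). In the induced sub-digraph $\tilde{\mathcal{D}}_1$ this receiver has side-information $\mathcal{K}_j\cap\tilde{\mathcal{P}}_1$, and since $\tilde{\mathcal{C}}_1$ is a valid index code for $\mathcal{I}(\tilde{\mathcal{D}}_1,\tilde{\mathcal{A}}_1)$ it recovers $x_j$ from $\tilde{\mathcal{C}}_1$ together with $\mathcal{K}_j\cap\tilde{\mathcal{P}}_1$. In the full problem the receiver possesses all of $\mathcal{K}_j\supseteq\mathcal{K}_j\cap\tilde{\mathcal{P}}_1$ and receives both sub-codes, so it decodes $x_j$.

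For weak security I would argue by contradiction using Lemma \ref{nesusec}. Suppose $\mathcal{C}$ is not weakly secure: there exist $i\in\mathcal{A}^c$ and $\mathbf{d}=(\mathbf{d}_1\mid\mathbf{d}_2)$ with $\mathbf{d}\mathbf{G}\approx\mathbf{e}_{x_{\mathcal{A}}}^{(i,m)}$. As $\tilde{\mathcal{P}}_1,\tilde{\mathcal{P}}_2$ partition $\mathcal{M}$, the index $i$ lies in exactly one block, say $\tilde{\mathcal{P}}_1$, and then $i\notin\tilde{\mathcal{A}}_1$ because $i\notin\mathcal{A}\supseteq\tilde{\mathcal{A}}_1$. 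Restricting $\mathbf{d}\mathbf{G}\approx\mathbf{e}_{x_{\mathcal{A}}}^{(i,m)}$ to the columns of $\tilde{\mathcal{P}}_1$ and using block-diagonality (so that $\mathbf{d}_2\tilde{\mathbf{G}}_2$ contributes nothing on those columns) gives $\mathbf{d}_1\tilde{\mathbf{G}}_1\approx\mathbf{e}_{x_{\tilde{\mathcal{A}}_1}}^{(i,m_1')}$ with $m_1'=|\tilde{\mathcal{P}}_1|$, since the restriction of $\mathbf{e}_{x_{\mathcal{A}}}^{(i,m)}$ to $\tilde{\mathcal{P}}_1$ has a $1$ in coordinate $i$, $x$'s exactly on $\tilde{\mathcal{A}}_1$, and $0$'s elsewhere. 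This contradicts the weak security of $\tilde{\mathcal{C}}_1$, i.e. Lemma \ref{nesusec} applied to $\mathcal{I}(\tilde{\mathcal{D}}_1,\tilde{\mathcal{A}}_1)$, so no such $\mathbf{d}$ exists and $\mathcal{C}$ is weakly secure. This is precisely the contradiction mechanism of Lemma \ref{lem3suicp}, now read off the block-diagonal two-sender matrix.

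I expect the main obstacle to be organizational rather than deep: getting the index bookkeeping right in the restriction step and justifying that the single-sender Lemmas \ref{lemdau}, \ref{nesusec}, and \ref{lem3suicp} transfer to the two-sender code, which rests on the stated fact that the overall transmission is of the form $\mathbf{G}\mathbf{x}$. A point worth emphasizing is that the construction never refers to the interactions between $\tilde{\mathcal{D}}_1$ and $\tilde{\mathcal{D}}_2$: these only populate the off-diagonal (interaction) blocks of the fitting matrix, which play no role in either the decoding or the security argument, and this is exactly why the claim holds for any type of interactions, including partially-participated ones.
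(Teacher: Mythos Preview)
Your proposal is correct and follows essentially the same route as the paper: you write the concatenated code in block-diagonal form and then prove weak security by the contradiction of Lemma~\ref{nesusec}, exactly as the paper does in its proof of Theorem~\ref{thm1}. You are in fact more explicit than the paper about transmittability and decodability (the paper dismisses decodability in a single sentence), but the substance and structure of the argument coincide.
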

\begin{proof}
The code $\tilde{\mathcal{C}}_i$
can be written as ${\tilde{\bf{G}}}_i\tilde{{\bf{x}}}_i$, $i \in \{1,2\}$, where ${\tilde{\bf{G}}}_i$ is an $l_i \times |\tilde{\mathcal{P}}_{i}|$ matrix, with $\tilde{{\bf{x}}}_i$ being the concatenated message vector of messages in ${\tilde{\mathcal{P}_i}}$.  Consider the code ($\tilde{\mathcal{C}}_1,\tilde{\mathcal{C}}_2$) given in (\ref{eq1thm1}), obtained from the given codes.
\begin{equation}
{\bf{G}}{\bf{x}}=
\left(
\begin{array}{c|c}
{\tilde{\bf{G}}}_{1} & {\bf{0}} \\
\hline
{\bf{0}} & \tilde{{\bf{G}}}_{2} \\
\end{array}
\right)
\left(
\begin{array}{c}
\tilde{{\bf{x}}}_{1}  \\
\hline
\tilde{{\bf{x}}}_{2}  \\
\end{array}
\right)
\label{eq1thm1}
\end{equation}
All the receivers are able to decode their demands. We now prove that the code is also weakly secure against the eavesdropper by contradiction. Assume that there exists a vector ${\bf{d}}=({\bf{d}}_1|{\bf{d}}_2) \in \mathbb{F}_q^{1 \times l}$, with ${\bf{d}}_i \in \mathbb{F}_q^{1 \times l_i}$, $i \in \{1,2\}$, such that 
${\bf{d}}{\bf{G}} = ({\bf{d}}_1\tilde{{\bf{G}}}_1 | {\bf{d}}_2\tilde{{\bf{G}}}_2) \approx {\bf{e}}_{x_{\mathcal{A}}}^{(j,m)}$, for some $j \in \mathcal{A}^{c}$. If $j \leq |\tilde{\mathcal{P}}_1|$, then we have ${\bf{d}}_1\tilde{{\bf{G}}}_1 \approx {\bf{e}}_{x_{\tilde{\mathcal{A}}_1}}^{(j,|\tilde{\mathcal{P}}_1|)}$. Otherwise we have, ${\bf{d}}_2\tilde{{\bf{G}}}_2 \approx {\bf{e}}_{x_{\tilde{\mathcal{A}}_2}}^{(j-|\tilde{\mathcal{P}}_1|,|\tilde{\mathcal{P}}_2|)}$. In both the cases, this leads to a contradiction as there must not exist such ${\bf{d}}_1$ (or ${\bf{d}}_2$ if $j > |\tilde{\mathcal{P}}_1|$), due to our assumption of the codes given by the matrices $\{\tilde{{\bf{G}}}_i\}_{i \in \{1,2\}}$ being weakly secure, according to Lemma \ref{nesusec}. Hence the result.
\end{proof}

We illustrate the theorem with an example.

\begin{exmp}
Consider the following TUICP with $\mathcal{M}_1=\{x_1,x_2,x_3,x_4,x_5,x_6\}$, and 
$\mathcal{M}_2=\{x_5,x_6,x_7,x_8,x_9\}$. Hence, $\mathcal{P}_1=\{x_1,x_2,x_3,x_4\}$, $\mathcal{P}_2=\{x_7,x_8,x_9\}$, and $\mathcal{P}_{\{1,2\}}=\{x_5,x_6\}$. The $i$th receiver demands $x_i$, $i \in \{1,2,\cdots,9\}$. The side-information of each receiver is: 
\begin{gather*}
\mathcal{K}_1=\{x_2,x_3\},\mathcal{K}_2=\{x_3,x_4\}, \mathcal{K}_3 = \{x_4,x_5\},\\
\mathcal{K}_4=\{x_5,x_1\}, \mathcal{K}_5=\{x_1,x_2\},\mathcal{K}_6=\{x_7,x_8\},\\ \mathcal{K}_7 = \{x_8,x_9\},\mathcal{K}_8=\{x_9,x_6\}, \mathcal{K}_9=\{x_6,x_7\},
\end{gather*}
The eavesdropper has side-information given by $x_{\mathcal{A}}=\{x_2,x_5,x_6,x_8\}$. Hence, $\mathcal{A}_1=\{2\},\mathcal{A}_2=\{8\},\mathcal{A}_{\{1,2\}}=\{5,6\}$. Consider $\mathcal{P}_{\{1,2\}}^{(1)}=x_5$, and $\mathcal{P}_{\{1,2\}}^{(1)}=x_6$. Hence, $\tilde{\mathcal{P}}_1=\{x_1,x_2,x_3,x_4,x_5\}$, and $\tilde{\mathcal{A}_1}=\{2,5\}$. It can be easily verified that the following are valid codes for $\mathcal{I}(\tilde{\mathcal{D}_i},\tilde{\mathcal{A}_i})$, $i \in \{1,2\}$. 
\begin{gather*}
\tilde{\mathcal{C}}_1 = (x_1+x_2+x_3,~x_2+x_3+x_4,~x_3+x_4+x_5),\\
\tilde{\mathcal{C}}_2 = (x_6+x_8,~x_7+x_9).
\end{gather*} 
\end{exmp}

We now state a result for any TUICP (for the sake of  completeness) with any type of interactions, which also holds for Cases I and II-A, without proof. The proof follows on the same lines as that of Theorem \ref{thm1}.
 
\begin{cor}[Naive scheme]
	Consider any TUICP $\mathcal{I}(\mathcal{D},\mathcal{P},\mathcal{A})$ with any type of interactions. If there exists a weakly secure linear code $\mathcal{C}_{\mathcal{T}}$, for the problem $\mathcal{I}(\mathcal{D},\mathcal{A}_{\mathcal{T}})$, for all non-empty $\mathcal{T} \subseteq \{1,2\}$, then the linear code $\mathcal{C}=(\mathcal{C}_1,\mathcal{C}_2,\mathcal{C}_{\{1,2\}})$ is weakly secure for the TUICP $\mathcal{I}(\mathcal{D},\mathcal{P},\mathcal{A})$.
\end{cor}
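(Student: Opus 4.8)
The plan is to mirror the argument for Theorem~\ref{thm1}, but now keeping all three parts $\mathcal{P}_1$, $\mathcal{P}_2$, $\mathcal{P}_{\{1,2\}}$ as separate blocks rather than folding $\mathcal{P}_{\{1,2\}}$ into the other two. Write $\mathbf{G}_{\mathcal{T}}$ for an encoding matrix of the given weakly secure code $\mathcal{C}_{\mathcal{T}}$ of the subproblem $\mathcal{I}(\mathcal{D}_{\mathcal{T}},\mathcal{A}_{\mathcal{T}})$, of length $l_{\mathcal{T}}$, for each non-empty $\mathcal{T}\subseteq\{1,2\}$. Since the messages partition as $\mathcal{M}=\mathcal{P}_1\cup\mathcal{P}_2\cup\mathcal{P}_{\{1,2\}}$, concatenating the three codes corresponds to the block-diagonal encoding matrix $\mathbf{G}=\mathrm{diag}(\mathbf{G}_1,\mathbf{G}_2,\mathbf{G}_{\{1,2\}})$, so that $\mathbf{G}\mathbf{x}=(\mathbf{G}_1\mathbf{x}^{(1)}\,|\,\mathbf{G}_2\mathbf{x}^{(2)}\,|\,\mathbf{G}_{\{1,2\}}\mathbf{x}^{(\{1,2\})})$. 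This respects the two-sender transmission format in~(\ref{genrec}): $\mathcal{C}_1$ is sent by $\mathcal{S}_1$, $\mathcal{C}_2$ by $\mathcal{S}_2$, and $\mathcal{C}_{\{1,2\}}$ by either sender, as both hold $\mathcal{P}_{\{1,2\}}$.

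For decodability, I would observe that every receiver $i$ demands a message $x_i$ lying in exactly one part $\mathcal{P}_{\mathcal{T}}$; the subproblem side-information digraph $\mathcal{D}_{\mathcal{T}}$ records only the edges of $\mathcal{D}$ internal to $\mathcal{P}_{\mathcal{T}}$, so decoding $x_i$ from $\mathcal{C}_{\mathcal{T}}$ needs only the side-information $\mathcal{K}_i\cap\mathcal{P}_{\mathcal{T}}$. Since receiver $i$ in the full problem possesses all of $\mathcal{K}_i$ and receives all three codes, it can in particular recover $x_i$. Note that no interaction analysis enters here; this is precisely why the scheme is \emph{naive} and works for any interaction type, at the cost of not exploiting the inter-part edges.

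The security argument is by contradiction via Lemma~\ref{nesusec}, exactly as in Theorem~\ref{thm1}. Suppose $\mathbf{G}$ is not weakly secure; then there is $\mathbf{d}=(\mathbf{d}_1\,|\,\mathbf{d}_2\,|\,\mathbf{d}_{\{1,2\}})$ and some $j\in\mathcal{A}^c$ with $\mathbf{d}\mathbf{G}=(\mathbf{d}_1\mathbf{G}_1\,|\,\mathbf{d}_2\mathbf{G}_2\,|\,\mathbf{d}_{\{1,2\}}\mathbf{G}_{\{1,2\}})\approx\mathbf{e}_{x_{\mathcal{A}}}^{(j,m)}$. The index $j$ falls in exactly one part $\mathcal{P}_{\mathcal{T}}$; because $\mathbf{G}$ is block diagonal, the columns of $\mathbf{d}\mathbf{G}$ indexed by $\mathcal{P}_{\mathcal{T}}$ are precisely $\mathbf{d}_{\mathcal{T}}\mathbf{G}_{\mathcal{T}}$, unaffected by the other blocks. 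Restricting the completion relation to these columns, and using $\mathcal{A}_{\mathcal{T}}=\mathcal{A}\cap\mathcal{P}_{\mathcal{T}}$, yields $\mathbf{d}_{\mathcal{T}}\mathbf{G}_{\mathcal{T}}\approx\mathbf{e}_{x_{\mathcal{A}_{\mathcal{T}}}}^{(j',m_{\mathcal{T}})}$, where $j'$ is the local index of $j$ within $\mathcal{P}_{\mathcal{T}}$ and $j'\in\mathcal{A}_{\mathcal{T}}^c$ since $j\notin\mathcal{A}$. This contradicts the weak security of $\mathcal{C}_{\mathcal{T}}$ granted by hypothesis.

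The only delicate step, and the one I would write out carefully, is verifying that the completion relation~$\approx$ restricts correctly to a single block: one must check that the $1$ sits in position $j'$, that the free (``$x$'') positions of $\mathbf{e}_{x_{\mathcal{A}}}^{(j,m)}$ lying in $\mathcal{P}_{\mathcal{T}}$ are exactly those of $\mathcal{A}_{\mathcal{T}}$, and that all remaining positions within the block are forced to $0$. Block-diagonality of $\mathbf{G}$ makes this bookkeeping immediate, so the main content reduces to the index translation $j\mapsto j'$ together with the identity $\mathcal{A}_{\mathcal{T}}=\mathcal{A}\cap\mathcal{P}_{\mathcal{T}}$.
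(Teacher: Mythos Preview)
Your proposal is correct and follows exactly the approach the paper indicates: the paper states the corollary without proof, remarking only that ``the proof follows on the same lines as that of Theorem~\ref{thm1},'' and your write-up is precisely that three-block analogue of the Theorem~\ref{thm1} argument via Lemma~\ref{nesusec}.
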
  
	
\begin{rem}
The weakly secure linear index code given in Theorem \ref{thm1} need not be optimal in general.
\end{rem}

\subsection{Code-constructions for some classes of the TUICP belonging to Cases II-B,II-C,II-D, and II-E}

For the remainder of this section, let   $\mathcal{C}_{\mathcal{T}}$ be an optimal weakly secure linear code for the problem  $\mathcal{I}(\mathcal{D}_{\mathcal{T}},\mathcal{A}_{\mathcal{T}})$, for all non-empty $\mathcal{T} \subseteq \{1,2\}$.	
We now state and prove the result for Case II-B.
\begin{thm}[Case II-B]
Consider any TUICP  $\mathcal{I}(\mathcal{D},\mathcal{P},\mathcal{A})$, with the interactions $\mathcal{D}_i  \leftarrow \mathcal{D}_{\{1,2\}}$ and $\mathcal{D}_{\{1,2\}}  \leftarrow \mathcal{D}_{i}$, for all $i \in \{1,2\}$, being fully-participated, and belonging to Case II-B. If weakly secure linear index codes exist for $\mathcal{I}(\mathcal{D}_{\mathcal{T}},\mathcal{A}_{\mathcal{T}})$, for all non-empty $\mathcal{T} \subseteq \{1,2\}$, the optimal codelength  $l^{*}(\mathcal{I}(\mathcal{D},\mathcal{P},\mathcal{A}))$ satisfies $l^{*} \leq max\{l^{*}_{\{1,2\}}, l^{*}_{1}+l^{*}_{2}\}$.	
\label{thm3}
\end{thm}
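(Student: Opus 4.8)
The plan is to superimpose the three optimal sub-codes rather than concatenate them (as in the Naive scheme), exploiting the fully-participated interactions so that each class of receivers can cancel the codewords it does not need. Let ${\bf{c}}_{\mathcal{T}}={\bf{G}}_{\mathcal{T}}{\bf{x}}^{(\mathcal{T})}$ be the codeword produced by the optimal code $\mathcal{C}_{\mathcal{T}}$ of length $l^{*}_{\mathcal{T}}$, for each non-empty $\mathcal{T}\subseteq\{1,2\}$. Set $L=\max\{l^{*}_{\{1,2\}},l^{*}_1+l^{*}_2\}$ and form the length-$L$ word ${\bf{c}}=({\bf{c}}_1,{\bf{c}}_2)+{\bf{c}}_{\{1,2\}}$ using the zero-padding addition defined in Section II; I would let $\mathcal{S}_1$ transmit ${\bf{c}}[1:l^{*}_1]$ and $\mathcal{S}_2$ transmit ${\bf{c}}[l^{*}_1+1:L]$. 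Sender $\mathcal{S}_1$ can form its part because it holds $\mathcal{P}_1\cup\mathcal{P}_{\{1,2\}}$, and $\mathcal{S}_2$ can form its part because it holds $\mathcal{P}_2\cup\mathcal{P}_{\{1,2\}}$; the total length is exactly $L$, and both regimes $l^{*}_{\{1,2\}}\leq l^{*}_1+l^{*}_2$ and $l^{*}_{\{1,2\}}>l^{*}_1+l^{*}_2$ are handled uniformly by the padding convention. In the notation of (\ref{genrec}) this realises an encoding matrix whose ${\bf{G}}_1,{\bf{G}}_2$ blocks are the sub-code matrices and whose ${\bf{G}}_{\{1,2\}}^{1},{\bf{G}}_{\{1,2\}}^{2}$ blocks are the two row-halves of the zero-padded ${\bf{G}}_{\{1,2\}}$.

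For decodability I would invoke the hypothesis directly. Because $\mathcal{D}_i\rightarrow\mathcal{D}_{\{1,2\}}$ is fully-participated, every receiver in $\mathcal{D}_i$ knows all of $\mathcal{P}_{\{1,2\}}$, hence can reconstruct ${\bf{c}}_{\{1,2\}}$, subtract it, recover ${\bf{c}}_i$, and decode its demand using $\mathcal{C}_i$. Dually, because $\mathcal{D}_{\{1,2\}}\rightarrow\mathcal{D}_i$ is fully-participated, every receiver in $\mathcal{D}_{\{1,2\}}$ knows all of $\mathcal{P}_1\cup\mathcal{P}_2$, hence can cancel ${\bf{c}}_1$ and ${\bf{c}}_2$, recover ${\bf{c}}_{\{1,2\}}$, and decode via $\mathcal{C}_{\{1,2\}}$. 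Side-information outside a receiver's own sub-problem only helps, so each $\mathcal{C}_{\mathcal{T}}$ stays decodable on its own vertex set.

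For weak security I would argue by contradiction through Lemma \ref{nesusec}. Suppose some ${\bf{d}}=({\bf{d}}_1|{\bf{d}}_2)$ satisfies ${\bf{d}}{\bf{G}}\approx{\bf{e}}_{x_{\mathcal{A}}}^{(j,m)}$ for a $j\in\mathcal{A}^c$. Writing the product block-wise gives ${\bf{d}}{\bf{G}}=({\bf{d}}_1{\bf{G}}_1\mid{\bf{d}}_2{\bf{G}}_2\mid{\bf{d}}_1{\bf{G}}_{\{1,2\}}^{1}+{\bf{d}}_2{\bf{G}}_{\{1,2\}}^{2})$. I would then split on the location of $j$: if $j$ indexes a message of $\mathcal{P}_1$, restricting to the $\mathcal{P}_1$-columns forces ${\bf{d}}_1{\bf{G}}_1\approx{\bf{e}}_{x_{\mathcal{A}_1}}^{(j_1,m_1)}$ (with $j_1$ the index of $j$ within $\mathcal{P}_1$), contradicting the weak security of $\mathcal{C}_1$; the case $j\in\mathcal{P}_2$ is symmetric and contradicts $\mathcal{C}_2$. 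If instead $j\in\mathcal{P}_{\{1,2\}}$, the $\mathcal{P}_{\{1,2\}}$-block is the crux: the two contributions ${\bf{d}}_1{\bf{G}}_{\{1,2\}}^{1}$ and ${\bf{d}}_2{\bf{G}}_{\{1,2\}}^{2}$ amalgamate into a single ${\bf{d}}_{\{1,2\}}{\bf{G}}_{\{1,2\}}$, where ${\bf{d}}_{\{1,2\}}$ collects the coefficients multiplying the $l^{*}_{\{1,2\}}$ genuine rows of ${\bf{G}}_{\{1,2\}}$ (the padded rows contribute nothing), and this would force ${\bf{d}}_{\{1,2\}}{\bf{G}}_{\{1,2\}}\approx{\bf{e}}_{x_{\mathcal{A}_{\{1,2\}}}}^{(j_{\{1,2\}},m_{\{1,2\}})}$, contradicting the weak security of $\mathcal{C}_{\{1,2\}}$.

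The step I expect to be the main obstacle is precisely this last reduction: since both senders carry $\mathcal{C}_{\{1,2\}}$, the shared block mixes ${\bf{d}}_1$ and ${\bf{d}}_2$, so I must verify that they combine into one admissible coefficient vector over the rows of ${\bf{G}}_{\{1,2\}}$ before Lemma \ref{nesusec} applies to $\mathcal{C}_{\{1,2\}}$. Keeping the row-splitting and zero-padding bookkeeping consistent across the two length regimes $l^{*}_{\{1,2\}}\leq l^{*}_1+l^{*}_2$ and $l^{*}_{\{1,2\}}>l^{*}_1+l^{*}_2$ is the accompanying technical care; once that is settled, the decodability and the $\mathcal{P}_1,\mathcal{P}_2$ security cases are routine, and the constructed code of length $L$ yields $l^{*}\leq\max\{l^{*}_{\{1,2\}},l^{*}_1+l^{*}_2\}$.
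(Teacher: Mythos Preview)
Your proposal is correct and follows essentially the same route as the paper: the paper also superimposes $\mathcal{C}_{\{1,2\}}$ onto the concatenation $(\mathcal{C}_1,\mathcal{C}_2)$, splits the result between the two senders, cites \cite{CTLO} for decodability, and argues weak security by the same block-wise contradiction against Lemma~\ref{nesusec}. The only cosmetic difference is that the paper lists four sub-cases (depending on the ordering of $l^{*}_{\{1,2\}}$, $l^{*}_1$, $l^{*}_2$) and proves one in detail, whereas your zero-padding formulation handles all regimes at once; your explicit treatment of the $j\in\mathcal{P}_{\{1,2\}}$ case---observing that ${\bf{d}}_1{\bf{G}}_{\{1,2\}}^{1}+{\bf{d}}_2{\bf{G}}_{\{1,2\}}^{2}$ collapses to a single linear combination of the genuine rows of ${\bf{G}}_{\{1,2\}}$---is exactly what the paper leaves as ``easily shown''.
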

\begin{proof}
	We consider the code-construction of classical index codes (without weak security restriction) given in the proof of Theorem $7$, in \cite{CTLO}. Then, we substitute optimal weakly secure linear codes in the place of classical index codes given in the construction and show that the resulting code is a weakly secure linear code. There are four sub-cases as follows: 
	$(i)$ $l^*_{\{1,2\}} \geq l^{*}_{1}+l^{*}_{2}$, $(ii)$ $l^*_{\{1,2\}} \geq max\{l^{*}_{1},l^{*}_{2}\}$, $(iii)$ $l^*_{\{1,2\}} \leq l^{*}_{1}$, and $(iv)$ $l^*_{\{1,2\}} \leq l^{*}_{2}$. We mention the code-construction presented in \cite{CTLO} for all the sub-cases, but prove the theorem only for sub-case $(i)$. The proofs for other sub-cases follow on similar lines.
	
	Code construction for sub-cases $(i)$ and $(ii)$: $\mathcal{C}_1 + \mathcal{C}_{\{1,2\}}[1:l^*_1]$ transmitted by $\mathcal{S}_1$, and $\mathcal{C}_2 + \mathcal{C}_{\{1,2\}}[l^*_1+1:l^*_{\{1,2\}}]$ transmitted by $\mathcal{S}_2$. Code construction for sub-case $(iii)$: $\mathcal{C}_1 + \mathcal{C}_{\{1,2\}}$ transmitted by $\mathcal{S}_1$, and $\mathcal{C}_2$ transmitted by $\mathcal{S}_2$.  Code construction for sub-case $(iv)$: $\mathcal{C}_2 + \mathcal{C}_{\{1,2\}}$ transmitted by $\mathcal{S}_2$, and $\mathcal{C}_1$ transmitted by $\mathcal{S}_1$.
	
	Proof for sub-case $(i)$: It has been shown that all the receivers are able to decode their demands from the given code in \cite{CTLO}. Consider any encoding matrix ${\bf{G}}_{\mathcal{T}}$ of the code $\mathcal{C}_{\mathcal{T}}$, for all non-empty $\mathcal{T} \subseteq \{1,2\}$. Let ${\bf{M}}^{[a:b]}$, denote the matrix obtained by taking consecutive rows starting from $a$th row to $b$th row of the matrix ${\bf{M}}$.  It can be  easily verified that the overall encoding matrix ${\bf{G}}$ can be written as shown in (\ref{eq1thm3}).
	\begin{equation}
	{\bf{G}}=
	\left(
	\begin{array}{c|c|c}
	{\bf{G}}_{1} & {\bf{0}}   & {\bf{G}}_{\{1,2\}}^{[1:l^*_1]} \\ 
	\hline
	{\bf{0}} & {\bf{G}}_{2}  & {\bf{G}}_{\{1,2\}}^{[l^*_1+1:l^*_1+l^*_2]} \\
	\hline
	{\bf{0}} & {\bf{0}}  & {\bf{G}}_{\{1,2\}}^{[l^*_1+l^*_2+1:l^*_{\{1,2\}}]} \\
	\end{array}
	\right)
	\label{eq1thm3}
	\end{equation}
	It can be easily shown (as shown in the proof of Theorem \ref{thm1}), that there does not exist any vector ${\bf{d}} \in \mathbb{F}_q^{l^*_{\{1,2\}}}$, such that ${\bf{d}}{\bf{G}} \approx {\bf{e}}_{x_{\mathcal{A}}}^{(j,m)}$, for any $j \in \mathcal{A}^c$. If such a vector exists, it leads to the contradiction that atleast one of the codes of the subproblems is not weakly secure linear, according to Lemma \ref{nesusec}. Hence, the code-construction yields a weakly secure linear code for the TUICP. 
\end{proof}

We state the results for Cases II-C, II-D and II-E without proof. The proofs follow on similar lines as that of Theorem \ref{thm3}. The code-construction for Cases II-C and II-D is based on the code-construction of classical index codes (without weak security restriction) given in the proof of Theorem $8$, in \cite{CTLO}.

\begin{cor}[Case II-C]
	Consider any TUICP  $\mathcal{I}(\mathcal{D},\mathcal{P},\mathcal{A})$, with the interactions $\mathcal{D}_1  \leftarrow \mathcal{D}_{\{1,2\}}$ and $\mathcal{D}_{\{1,2\}}  \leftarrow \mathcal{D}_{1}$ being fully-participated, and belonging to Case II-C. If weakly secure linear index codes exist for $\mathcal{I}(\mathcal{D}_{\mathcal{T}},\mathcal{A}_{\mathcal{T}})$, for all non-empty $\mathcal{T} \subseteq \{1,2\}$, the optimal codelength $l^{*} \leq max\{l^{*}_{\{1,2\}},l^{*}_{1}\}+l^{*}_{2}$. The code-construction giving this bound is given as:	$\mathcal{C}_1 + \mathcal{C}_{\{1,2\}}$ transmitted by $\mathcal{S}_1$, and $\mathcal{C}_2$ transmitted by $\mathcal{S}_2$. 
	\label{cor2}
\end{cor}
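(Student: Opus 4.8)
The plan is to mirror the proof of Theorem \ref{thm3}, now specialized to the Case II-C combination. I would start from the classical (non-secure) two-sender code for Case II-C given in the proof of Theorem~8 of \cite{CTLO}, and substitute the optimal weakly secure single-sender codes $\mathcal{C}_1$, $\mathcal{C}_2$, $\mathcal{C}_{\{1,2\}}$ for the classical codes used there. Decodability then carries over verbatim from \cite{CTLO}, since that argument uses only the fully-participated interactions $\mathcal{D}_1 \leftrightarrow \mathcal{D}_{\{1,2\}}$ and the structure of the combination $\mathcal{C}_1+\mathcal{C}_{\{1,2\}}$, not the particular choice of codes. The length claim is immediate: $\mathcal{S}_1$ transmits $\mathcal{C}_1+\mathcal{C}_{\{1,2\}}$, whose length after the zero-padded addition is $\max\{l^*_1,l^*_{\{1,2\}}\}$, while $\mathcal{S}_2$ transmits $\mathcal{C}_2$ of length $l^*_2$; since this is an achievable weakly secure code, $l^* \leq \max\{l^*_{\{1,2\}},l^*_1\}+l^*_2$.

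For weak security, I would first record the overall encoding matrix ${\bf{G}}$, partitioning columns as $({\bf{x}}^{(1)},{\bf{x}}^{(2)},{\bf{x}}^{(\{1,2\})})$ and rows into the $\mathcal{S}_1$-block followed by the $\mathcal{S}_2$-block. Taking the subcase $l^*_{\{1,2\}} \geq l^*_1$ for concreteness,
\[
{\bf{G}}=
\left(
\begin{array}{c|c|c}
{\bf{G}}_{1} & {\bf{0}} & {\bf{G}}_{\{1,2\}}^{[1:l^*_1]} \\
{\bf{0}} & {\bf{0}} & {\bf{G}}_{\{1,2\}}^{[l^*_1+1:l^*_{\{1,2\}}]} \\
{\bf{0}} & {\bf{G}}_{2} & {\bf{0}}
\end{array}
\right),
\]
with the symmetric form when $l^*_1 > l^*_{\{1,2\}}$. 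Arguing by contradiction through Lemma \ref{nesusec}, I would suppose there is ${\bf{d}}=({\bf{d}}_1|{\bf{d}}_2)$ with ${\bf{d}}{\bf{G}} \approx {\bf{e}}_{x_{\mathcal{A}}}^{(j,m)}$ for some $j \in \mathcal{A}^c$, where ${\bf{d}}_1$ addresses the $\mathcal{S}_1$-rows (split further as ${\bf{d}}_1=({\bf{d}}_1^a|{\bf{d}}_1^b)$ with ${\bf{d}}_1^a$ of length $l^*_1$) and ${\bf{d}}_2$ the $\mathcal{S}_2$-rows. Projecting ${\bf{d}}{\bf{G}}$ onto the three column blocks gives ${\bf{d}}_1^a{\bf{G}}_1$ on the ${\bf{x}}^{(1)}$-columns, ${\bf{d}}_2{\bf{G}}_2$ on the ${\bf{x}}^{(2)}$-columns, and ${\bf{d}}_1{\bf{G}}_{\{1,2\}}$ on the ${\bf{x}}^{(\{1,2\})}$-columns.

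The case split is on which message set contains $j$. If $j \in \mathcal{P}_1$, the ${\bf{x}}^{(1)}$-projection completes to ${\bf{e}}_{x_{\mathcal{A}_1}}^{(\cdot,m_1)}$; if $j \in \mathcal{P}_2$, the ${\bf{x}}^{(2)}$-projection completes to ${\bf{e}}_{x_{\mathcal{A}_2}}^{(\cdot,m_2)}$; and if $j \in \mathcal{P}_{\{1,2\}}$, the ${\bf{x}}^{(\{1,2\})}$-projection completes to ${\bf{e}}_{x_{\mathcal{A}_{\{1,2\}}}}^{(\cdot,m_{\{1,2\}})}$. In each case Lemma \ref{nesusec} is violated for the corresponding subproblem $\mathcal{I}(\mathcal{D}_{\mathcal{T}},\mathcal{A}_{\mathcal{T}})$, contradicting the assumed weak security of $\mathcal{C}_{\mathcal{T}}$, which completes the argument.

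The main obstacle I expect, and the one point where this differs from the block-diagonal setting of Lemma \ref{lem3suicp}, is that the ${\bf{x}}^{(1)}$- and ${\bf{x}}^{(\{1,2\})}$-columns are both driven by the \emph{same} coordinate block ${\bf{d}}_1$, because $\mathcal{C}_1$ and $\mathcal{C}_{\{1,2\}}$ are summed in the $\mathcal{S}_1$ transmission. I must therefore confirm that the zero-padding and truncation inside $\mathcal{C}_1+\mathcal{C}_{\{1,2\}}$ do not couple these blocks so as to admit a spurious ${\bf{d}}$ that satisfies the $\approx$ relation for some $j$ without breaking any single subproblem. The resolution is precisely the projection observation above: the ${\bf{0}}$ pattern in the column partition forces each message-set block of ${\bf{d}}{\bf{G}}$ to depend only on its own sub-matrix, so the block containing $j$ by itself yields the contradiction, independently of the remaining coordinates of ${\bf{d}}$.
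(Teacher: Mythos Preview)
Your proposal is correct and follows exactly the approach indicated by the paper, which states the corollary without proof and notes only that ``the proofs follow on similar lines as that of Theorem~\ref{thm3}'' using the Case~II-C construction from Theorem~8 of \cite{CTLO}. Your explicit block form of ${\bf G}$ and the three-way case split on $j\in\mathcal{P}_1,\mathcal{P}_2,\mathcal{P}_{\{1,2\}}$ spell out precisely the contradiction-via-Lemma~\ref{nesusec} argument that the paper leaves implicit.
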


The result for Case II-D follows from that of Case II-C, by observing that the interaction digraphs of Case II-D are obtained by interchanging the labels $1$ and $2$ of the vertices in the corresponding interaction digraphs of Case II-C. Hence, we have the following result.

\begin{cor}[Case II-D]
	Consider any TUICP  $\mathcal{I}(\mathcal{D},\mathcal{P},\mathcal{A})$, with the interactions $\mathcal{D}_2  \leftarrow \mathcal{D}_{\{1,2\}}$ and $\mathcal{D}_{\{1,2\}}  \leftarrow \mathcal{D}_{2}$ being fully-participated, and belonging to Case II-D. If weakly secure linear index codes exist for $\mathcal{I}(\mathcal{D}_{\mathcal{T}},\mathcal{A}_{\mathcal{T}})$, for all non-empty $\mathcal{T} \subseteq \{1,2\}$, the optimal codelength $l^{*} \leq max\{l^{*}_{\{1,2\}},l^{*}_{2}\}+l^{*}_{1}$. The code-construction giving this bound is given as:	$\mathcal{C}_2 + \mathcal{C}_{\{1,2\}}$ transmitted by $\mathcal{S}_2$, and $\mathcal{C}_1$ transmitted by $\mathcal{S}_1$. 
		\label{cor3}
\end{cor}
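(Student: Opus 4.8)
The plan is to derive Case II-D from Case II-C (Corollary \ref{cor2}) by a relabeling symmetry, exactly as the paragraph preceding the statement suggests, rather than redoing the encoding-matrix computation from scratch. First I would formalize the observation that swapping the two sender labels is an automorphism of the two-sender setup: it interchanges $\mathcal{S}_1 \leftrightarrow \mathcal{S}_2$, hence $\mathcal{P}_1 \leftrightarrow \mathcal{P}_2$ (equivalently $\mathcal{D}_1 \leftrightarrow \mathcal{D}_2$) and $\mathcal{A}_1 \leftrightarrow \mathcal{A}_2$, while fixing $\mathcal{P}_{\{1,2\}}$, $\mathcal{D}_{\{1,2\}}$, and $\mathcal{A}_{\{1,2\}}$. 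Since the vertex set of the interaction digraph $\mathcal{H}$ is $\{1,2,\{1,2\}\}$ and this swap fixes the vertex $\{1,2\}$ while transposing $1$ and $2$, it carries each Case II-D interaction digraph bijectively onto a Case II-C interaction digraph (this is precisely the enumeration observation in Figure \ref{interenum}), and in particular it carries the fully-participated pair $\mathcal{D}_2 \leftrightarrow \mathcal{D}_{\{1,2\}}$ onto $\mathcal{D}_1 \leftrightarrow \mathcal{D}_{\{1,2\}}$, so the hypotheses of Corollary \ref{cor2} hold for the mirrored instance.

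Next I would check that this relabeling preserves the weakly secure index coding structure. The relabeling acts on message indices as a permutation $\pi$ of $[m]$, which permutes the columns of any encoding matrix $\mathbf{G}$. The condition of Lemma \ref{nesusec} — the non-existence of $\mathbf{d}$ with $\mathbf{d}\mathbf{G}\approx \mathbf{e}_{x_{\mathcal{A}}}^{(i,m)}$ — is covariant under such a coordinate permutation, since the relation $\approx$, the support pattern of $\mathbf{e}_{x_{\mathcal{A}}}^{(i,m)}$, and the row space all transform consistently under $\pi$. Hence a code is weakly secure for the original Case II-D instance if and only if its $\pi$-image is weakly secure for the mirrored Case II-C instance, and optimal codelengths are preserved: in particular $l^*_1$ and $l^*_2$ are interchanged while $l^*_{\{1,2\}}$ is left unchanged.

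Then I would apply Corollary \ref{cor2} to the mirrored Case II-C instance. That corollary supplies a weakly secure linear code of length $\max\{l^*_{\{1,2\}}, l^*_1\}+l^*_2$ computed with the swapped labels; undoing the swap replaces $l^*_1$ by $l^*_2$ and $l^*_2$ by $l^*_1$, giving length $\max\{l^*_{\{1,2\}}, l^*_2\}+l^*_1$ for the original problem. The construction itself, $\mathcal{C}_1+\mathcal{C}_{\{1,2\}}$ transmitted by $\mathcal{S}_1$ and $\mathcal{C}_2$ transmitted by $\mathcal{S}_2$, maps under the swap to $\mathcal{C}_2+\mathcal{C}_{\{1,2\}}$ from $\mathcal{S}_2$ and $\mathcal{C}_1$ from $\mathcal{S}_1$, exactly as claimed.

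The step I expect to be the main obstacle is making the symmetry argument airtight rather than merely suggestive — i.e. verifying that the label swap is a genuine isomorphism of the weakly secure TUICP that simultaneously respects the orthogonal-transmission and side-information structure, maps Case II-D digraphs onto Case II-C digraphs while preserving the fully-participated hypothesis, and preserves weak security and optimal codelengths coordinate-wise. Once these invariances are established, both the bound and the construction transfer formally. Alternatively, if one prefers to avoid invoking the symmetry, the entire argument can be repeated directly: one writes the overall encoding matrix for the Case II-D construction (the roles-interchanged analogue of (\ref{eq1thm3})) and applies the contradiction argument of Theorem \ref{thm3} verbatim with the roles of $\mathbf{G}_1$ and $\mathbf{G}_2$ swapped.
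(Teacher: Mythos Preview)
Your approach is essentially the same as the paper's: the paper derives Case II-D from Case II-C precisely by observing that the interaction digraphs of Case II-D are obtained by interchanging the labels $1$ and $2$ in those of Case II-C, and then states Corollary~\ref{cor3} without further proof. Your proposal simply fills in the details of this symmetry argument (and correctly notes the alternative of rerunning the Theorem~\ref{thm3} contradiction argument with roles swapped), so it matches the paper's intended route.
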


The result of the following corollary depends on the code-constructions given in \cite{CVBSR1} (for sub-cases $(i)$ and $(ii)$ given in the following corollary) and \cite{CTLO} (for sub-case $(iii)$).  
 \begin{cor}[Case II-E]
	Consider any TUICP  $\mathcal{I}(\mathcal{D},\mathcal{P},\mathcal{A})$, with fully-participated interactions, and belonging to Case II-E. If weakly secure linear index codes exist for $\mathcal{I}(\mathcal{D}_{\mathcal{T}},\mathcal{A}_{\mathcal{T}})$, for all non-empty $\mathcal{T} \subseteq \{1,2\}$, the optimal codelength $l^{*} \leq max\{l^{*}_{\{1,2\}}+l^{*}_{1},l^{*}_{\{1,2\}}+l^{*}_{2},l^{*}_{2}+l^{*}_{1}\}$. The code-construction giving this bound depends on three sub-cases:
	\begin{gather*}
	(i) ~\mbox{If}~~l^*_{1} \leq min\{l^*_{2},l^*_{\{1,2\}}\},  ~\mbox{send}	 
	 ~\mathcal{C}_1 + \mathcal{C}_{\{1,2\}}[1:l^*_1] ~\mbox{by}~ \mathcal{S}_1,\\	 
	 ~\mathcal{C}_2 + \mathcal{C}_{\{1,2\}}[1:l^*_1] ~\mbox{by}~ \mathcal{S}_2,
	 ~\mathcal{C}_{\{1,2\}}[l^*_1+1:l^*_{\{1,2\}}] ~\mbox{by}~ \mathcal{S}_1 ~\mbox{or}~ \mathcal{S}_2.\\	 
	 (ii) ~\mbox{If}~~l^*_{2} \leq min\{l^*_{1},l^*_{\{1,2\}}\},  ~\mbox{send}~~
	  \mathcal{C}_1 + \mathcal{C}_{\{1,2\}}[1:l^*_2] ~\mbox{by}~ \mathcal{S}_1 ,	\\  
	  \mathcal{C}_2 + \mathcal{C}_{\{1,2\}}[1:l^*_2] ~\mbox{by}~ \mathcal{S}_2,~~  \mathcal{C}_{\{1,2\}}[l^*_2+1:l^*_{\{1,2\}}] ~\mbox{by}~ \mathcal{S}_1 ~\mbox{or}~ \mathcal{S}_2.\\
	  (iii)~ otherwise,  ~\mbox{send}~ \mathcal{C}_1 + \mathcal{C}_{\{1,2\}} ~\mbox{by}~ \mathcal{S}_1,~ \mathcal{C}_2 + \mathcal{C}_{\{1,2\}} ~\mbox{by}~ \mathcal{S}_2. 
	 \end{gather*}
	 \label{thm7}
\end{cor}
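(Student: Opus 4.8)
The plan is to follow the template of Theorem~\ref{thm3}. For each of the three sub-cases I would start from the classical (non-secure) constructions of \cite{CVBSR1} (for $(i)$ and $(ii)$) and \cite{CTLO} (for $(iii)$), which already guarantee that every receiver recovers its demand, substitute the optimal weakly secure sub-codes $\mathcal{C}_1,\mathcal{C}_2,\mathcal{C}_{\{1,2\}}$ for the classical ones, and then check weak security using Lemma~\ref{nesusec}. The three sub-cases are exhaustive (in the ``otherwise'' case one checks that $l^*_{\{1,2\}} \le \min\{l^*_1,l^*_2\}$), and the transmitted codelengths are $l^*_2+l^*_{\{1,2\}}$, $l^*_1+l^*_{\{1,2\}}$, and $l^*_1+l^*_2$ respectively; each is one of the three terms in $\max\{l^*_{\{1,2\}}+l^*_1,l^*_{\{1,2\}}+l^*_2,l^*_2+l^*_1\}$, so once weak security is established the stated bound on $l^*$ follows. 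I would prove $(i)$ in detail and note that $(ii)$ is symmetric and $(iii)$ is simpler.

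For sub-case $(i)$ I would first read off the overall encoding matrix ${\bf{G}}$ from the transmissions $\mathcal{C}_1+\mathcal{C}_{\{1,2\}}[1:l^*_1]$, $\mathcal{C}_2+\mathcal{C}_{\{1,2\}}[1:l^*_1]$, and $\mathcal{C}_{\{1,2\}}[l^*_1+1:l^*_{\{1,2\}}]$. With the message ordering $({\bf{x}}^{(1)},{\bf{x}}^{(2)},{\bf{x}}^{(\{1,2\})})$, the columns indexed by $\mathcal{P}_1$ are hit only by ${\bf{G}}_1$, those indexed by $\mathcal{P}_2$ only by ${\bf{G}}_2$ (spread over the $l^*_2$ rows of $\mathcal{S}_2$), while those indexed by $\mathcal{P}_{\{1,2\}}$ are hit by ${\bf{G}}_{\{1,2\}}^{[1:l^*_1]}$ in both senders' rows and by ${\bf{G}}_{\{1,2\}}^{[l^*_1+1:l^*_{\{1,2\}}]}$ in the remaining rows. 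I would then argue by contradiction: suppose ${\bf{d}}{\bf{G}}\approx{\bf{e}}_{x_{\mathcal{A}}}^{(j,m)}$ for some $j\in\mathcal{A}^c$, partition ${\bf{d}}$ along the row blocks, and collapse the restriction of ${\bf{d}}{\bf{G}}$ to each message group into a single product ${\bf{d}}'_{\mathcal{T}}{\bf{G}}_{\mathcal{T}}$: the $\mathcal{P}_1$ part is ${\bf{d}}'_1{\bf{G}}_1$, the $\mathcal{P}_2$ part is ${\bf{d}}'_2{\bf{G}}_2$ with ${\bf{d}}'_2\in\mathbb{F}_q^{1\times l^*_2}$, and the $\mathcal{P}_{\{1,2\}}$ part is ${\bf{d}}_f{\bf{G}}_{\{1,2\}}$ with ${\bf{d}}_f\in\mathbb{F}_q^{1\times l^*_{\{1,2\}}}$.

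The crucial point, and what separates this from Theorem~\ref{thm3}, is that $\mathcal{C}_{\{1,2\}}[1:l^*_1]$ is broadcast by both senders, so the coefficients applied to the two copies of ${\bf{G}}_{\{1,2\}}^{[1:l^*_1]}$ simply add; the total $\mathcal{P}_{\{1,2\}}$ contribution is therefore still a single product ${\bf{d}}_f{\bf{G}}_{\{1,2\}}$ rather than two conflicting constraints. Since $j$ lies in exactly one of $\mathcal{P}_1,\mathcal{P}_2,\mathcal{P}_{\{1,2\}}$, the restriction of ${\bf{d}}{\bf{G}}$ to that group equals (after reindexing $j$ within the group) $\approx{\bf{e}}_{x_{\mathcal{A}_{\mathcal{T}}}}^{(j',m_{\mathcal{T}})}$, which by Lemma~\ref{nesusec} contradicts the weak security of the corresponding optimal sub-code $\mathcal{C}_{\mathcal{T}}$; hence no such ${\bf{d}}$ exists and the construction is weakly secure. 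I expect the only real obstacle to be the bookkeeping behind this aggregation---confirming that the overlapping ${\bf{G}}_{\{1,2\}}^{[1:l^*_1]}$ blocks combine into a length-$l^*_{\{1,2\}}$ coefficient vector acting on the full ${\bf{G}}_{\{1,2\}}$---which becomes transparent once ${\bf{G}}$ is written out with the correct row-block partition.
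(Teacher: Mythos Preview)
Your proposal is correct and matches the paper's approach: the paper states this corollary without proof, saying only that ``the proofs follow on similar lines as that of Theorem~\ref{thm3}'' with the constructions drawn from \cite{CVBSR1} and \cite{CTLO}, which is precisely the template you adopt. Your extra care in spelling out how, in sub-case~$(i)$, the two copies of ${\bf{G}}_{\{1,2\}}^{[1:l^*_1]}$ combine so that the $\mathcal{P}_{\{1,2\}}$ restriction is still of the form ${\bf{d}}_f{\bf{G}}_{\{1,2\}}$ is exactly the bookkeeping needed to make the ``similar lines'' claim go through, and the paper does not make this explicit.
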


\begin{rem}
	Note that the results in this subsection provide non-trivial upper bounds on the optimal weakly secure index codelength using the code construction given in \cite{CVBSR1}.
\end{rem}

\subsection{A necessary condition for optimality of the construction}
In this subsection, we  obtain a necessary condition for the  optimality of the codes constructed in the previous subsection.
 
From Corollary 4.5 given in \cite{DSC2}, we know that any encoding matrix of any SUICP is obtained by replacing all the unknown entries of the fitting matrix of the SUICP with known entries from the field, and then taking any row basis of such a matrix. We state it here for easy reference.

\begin{lem}[Corollary 4.5, \cite{DSC2}]
	An $r \times m$ matrix ${\bf{G}}$  gives a valid index code for a given SUICP iff for all $i \in [m]$, there exists a vector ${\bf{u}}_i \in \mathbb{F}_q^m$, satisfying $(i)$ $supp({\bf{u}}_i) \subseteq \rchi_i$, where $\rchi_i$ is the set of indices of messages in $\mathcal{K}_i$, and $(ii)$	${\bf{u}}_i+{\bf{e}_i} \in \langle {\bf{G}} \rangle$, where ${\bf{e}}_i$ is the standard basis vector in $\mathbb{F}_q^m$ with $1$ in the  $i$th  co-ordinate and $0$'s in other co-ordinates.
	\label{lemdsc2}
\end{lem}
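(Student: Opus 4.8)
The plan is to read ``${\bf{G}}$ gives a valid index code'' as the requirement that every receiver $i$ can recover its demand $x_i$ from the received word ${\bf{G}}{\bf{x}}$ together with its side-information $\{x_j : j \in \rchi_i\}$, and then to characterize this decodability receiver-by-receiver. The key device is the augmented matrix $\tilde{\bf{G}}_i$ obtained by appending to ${\bf{G}}$ the standard basis rows ${\bf{e}}_j$ for $j \in \rchi_i$: receiver $i$ effectively observes $\tilde{\bf{G}}_i{\bf{x}}$, since it knows both the transmitted symbols ${\bf{G}}{\bf{x}}$ and each side-information symbol $x_j = {\bf{e}}_j{\bf{x}}$. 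First I would show that receiver $i$ can decode $x_i$ if and only if ${\bf{e}}_i \in \langle \tilde{\bf{G}}_i \rangle$, and then observe that $\langle \tilde{\bf{G}}_i \rangle = \langle {\bf{G}} \rangle + \mathrm{span}\{{\bf{e}}_j : j \in \rchi_i\}$, so that ${\bf{e}}_i \in \langle \tilde{\bf{G}}_i\rangle$ is exactly the existence of a vector ${\bf{u}}_i$ with $supp({\bf{u}}_i) \subseteq \rchi_i$ and ${\bf{u}}_i + {\bf{e}}_i \in \langle {\bf{G}}\rangle$. Quantifying over all $i \in [m]$ then yields the stated equivalence.

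For the sufficiency direction I would argue directly. Suppose for receiver $i$ there is a vector ${\bf{u}}_i$ supported on $\rchi_i$ with ${\bf{u}}_i + {\bf{e}}_i = {\bf{d}}_i{\bf{G}}$ for some ${\bf{d}}_i \in \mathbb{F}_q^{1 \times r}$. Then receiver $i$ computes ${\bf{d}}_i({\bf{G}}{\bf{x}}) = ({\bf{u}}_i + {\bf{e}}_i){\bf{x}} = x_i + \sum_{j \in \rchi_i}({\bf{u}}_i)_j x_j$; since every $x_j$ with $j \in \rchi_i$ is known side-information, subtracting the sum recovers $x_i$. This is a one-line computation and requires no subtlety.

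The necessity direction is where I expect the only real obstacle, namely ruling out that some (possibly nonlinear) decoder succeeds even when no such ${\bf{u}}_i$ exists. I would handle this by an indistinguishable-inputs argument rather than by assuming linear decoding. If ${\bf{e}}_i \notin \langle \tilde{\bf{G}}_i\rangle$, then $\ker \tilde{\bf{G}}_i \not\subseteq \ker {\bf{e}}_i$, so there is a vector ${\bf{z}}$ with $\tilde{\bf{G}}_i{\bf{z}} = {\bf{0}}$ but $z_i \neq 0$. The message vectors ${\bf{x}}$ and ${\bf{x}} + {\bf{z}}$ then produce identical codewords (${\bf{G}}{\bf{z}} = {\bf{0}}$) and identical side-information (${\bf{e}}_j{\bf{z}} = 0$ for $j \in \rchi_i$), yet disagree in the $i$th coordinate; since the $X_j$ are independent and uniform, both inputs occur with positive probability, so no decoding function can determine $x_i$. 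Contrapositively, decodability forces ${\bf{e}}_i \in \langle \tilde{\bf{G}}_i\rangle$, and the decomposition $\langle \tilde{\bf{G}}_i\rangle = \langle {\bf{G}}\rangle + \mathrm{span}\{{\bf{e}}_j : j \in \rchi_i\}$ produces the required ${\bf{u}}_i$.

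Finally I would package the two directions: requiring decodability for every receiver is, by the per-receiver equivalence, the same as requiring the existence of a suitable ${\bf{u}}_i$ for every $i \in [m]$, which is the claimed characterization. The main care point throughout is the bookkeeping identity $\langle \tilde{\bf{G}}_i\rangle = \langle {\bf{G}}\rangle + \mathrm{span}\{{\bf{e}}_j : j \in \rchi_i\}$ together with the sign convention that turns the appended-row decomposition ${\bf{e}}_i = {\bf{w}} - {\bf{u}}_i$ (with ${\bf{w}} \in \langle {\bf{G}}\rangle$, $supp({\bf{u}}_i)\subseteq \rchi_i$) into the stated form ${\bf{u}}_i + {\bf{e}}_i \in \langle {\bf{G}}\rangle$.
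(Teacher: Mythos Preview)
The paper does not prove this lemma at all: it is quoted verbatim as Corollary~4.5 of \cite{DSC2} and introduced with ``We state it here for easy reference,'' with no argument supplied. So there is no in-paper proof to compare against.

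Your proposed argument is correct and is essentially the standard proof of this fact. The sufficiency direction is exactly the intended one-line decoding computation. For necessity, your indistinguishable-inputs argument via $\ker \tilde{\bf G}_i$ is sound; the only place to be a touch more explicit is the step ``${\bf e}_i \notin \langle \tilde{\bf G}_i\rangle \Rightarrow \ker \tilde{\bf G}_i \not\subseteq \ker {\bf e}_i$,'' which relies on the identity $\langle \tilde{\bf G}_i\rangle = (\ker \tilde{\bf G}_i)^{\perp}$ over $\mathbb{F}_q$ (valid because the standard bilinear form on $\mathbb{F}_q^m$ is non-degenerate, so row space equals the annihilator of the null space). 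With that line added, the argument is complete and in fact stronger than needed here, since it rules out nonlinear decoders as well.
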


The following lemma provides an optimal weakly secure linear index code for an SUICP, whose side-information digraph can be partitioned into two vertex-disjoint sub-digraphs with no edges from one sub-digraph to the other or vice-versa.

\begin{lem}
	Consider an SUICP $\mathcal{I}(\mathcal{D},\mathcal{A})$ with the fitting matrix ${\bf{F}}_x$ of size $m \times m$ as given in (\ref{eqlem4}), where ${\bf{F}}_x^{(i)}$ is the $m_i \times m_i$ fitting matrix of an SUICP  $\mathcal{I}(\mathcal{D}_i,\mathcal{A}_i)$, $i \in \{1,2\}$, such that $m_1+m_2=m$. Let the eavesdropper have the message index set $\mathcal{A}=\mathcal{A}_1 \cup \mathcal{A}_2$, with $\mathcal{A}_1 \subset [m_1]$, and $\mathcal{A}_2 \subset [m] \setminus [m_1]$. 
	\begin{equation}
	{\bf{F}}_x=
	\left(
	\begin{array}{c|c}
	{\bf{F}}_{x}^{(1)} & {\bf{0}} \\
	\hline
	{\bf{0}} & {\bf{F}}_x^{(2)} \\
	\end{array}
	\right)
	\label{eqlem4}
	\end{equation}
	If $l^*_i$ is the optimal codelength, and ${\bf{G}}_{i}$ a corresponding encoding matrix for the problem  $\mathcal{I}(\mathcal{D}_i,\mathcal{A}_i)$, $i \in \{1,2\}$, then the optimal codelength $l^*$ for the problem  $\mathcal{I}(\mathcal{D},\mathcal{A})$ is $l^*_1 + l^*_2$, and a corresponding encoding matrix is given by ${\bf{G}}$ as in (\ref{eq1lem4}).  
	\begin{equation}
	{\bf{G}}=
	\left(
	\begin{array}{c|c}
	{\bf{G}}_{1} & {\bf{0}} \\
	\hline
	{\bf{0}} & {\bf{G}}_{2} \\
	\end{array}
	\right)
	\label{eq1lem4}
	\end{equation}
	\label{lem4}
\end{lem}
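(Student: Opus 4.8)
The upper bound $l^* \le l^*_1 + l^*_2$ is already delivered by Lemma~\ref{lem3suicp}: the block-diagonal fitting matrix in (\ref{eqlem4}) is exactly the special case of (\ref{eqlem3suicp}) with $\mathbf{B}_x^{(1)}=\mathbf{B}_x^{(2)}=\mathbf{0}$, so the concatenated matrix $\mathbf{G}$ of (\ref{eq1lem4}) is a valid weakly secure code of length $l^*_1+l^*_2$. Hence the whole task reduces to establishing the matching lower bound $l^* \ge l^*_1 + l^*_2$, and it is precisely the vanishing off-diagonal blocks (no edges between the two sub-digraphs) that upgrade the inequality of Lemma~\ref{lem3suicp} to the equality claimed here.

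For the lower bound I would start from an optimal, hence full-row-rank, encoding matrix $\mathbf{G}$ for $\mathcal{I}(\mathcal{D},\mathcal{A})$ with row space $V=\langle \mathbf{G}\rangle$ and $\dim V = l^*$. Splitting each vector of $\mathbb{F}_q^m$ into its two coordinate blocks $[m_1]$ and $[m]\setminus[m_1]$, I define $W_1 = V \cap (\mathbb{F}_q^{m_1}\times\{\mathbf{0}\})$ and $W_2 = V \cap (\{\mathbf{0}\}\times\mathbb{F}_q^{m_2})$, the codewords of $V$ supported entirely on the first, respectively second, block. The plan is to show that the restriction of $W_i$ to its block is itself a valid weakly secure code for the subproblem $\mathcal{I}(\mathcal{D}_i,\mathcal{A}_i)$, so that $\dim W_i \ge l^*_i$.

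Validity of $W_1$ combines Lemma~\ref{lemdsc2} with the block structure: for each receiver $i\in[m_1]$ we have $\rchi_i \subseteq [m_1]$, since the off-diagonal block of (\ref{eqlem4}) is $\mathbf{0}$, so the witness $\mathbf{u}_i+\mathbf{e}_i\in V$ supplied by Lemma~\ref{lemdsc2} is supported on $[m_1]$ and therefore already lies in $W_1$; its restriction to the first block is the required decoding witness for $\mathcal{I}(\mathcal{D}_1,\mathcal{A}_1)$. For weak security I would argue by contradiction via Lemma~\ref{lemdau}: if the restriction of $W_1$ were not secure, there would exist $j\in\mathcal{A}_1^c$ and $\mathbf{w}_1$ with $supp(\mathbf{w}_1)\subseteq\mathcal{A}_1$ and $\mathbf{w}_1+\mathbf{e}_j\in W_1$. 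Padding with zeros on the second block produces a vector of $V$ of the form $\mathbf{w}+\mathbf{e}_j$ with $supp(\mathbf{w})\subseteq\mathcal{A}_1\subseteq\mathcal{A}$; moreover $j\in[m_1]$ together with $j\notin\mathcal{A}_1$ and $\mathcal{A}_2\subseteq[m]\setminus[m_1]$ forces $j\in\mathcal{A}^c$, so this contradicts the weak security of $\mathbf{G}$ through Lemma~\ref{lemdau}. The symmetric argument handles $W_2$. This zero-extension step — verifying that a security breach confined to one block lifts to a genuine breach of the full code — is the only delicate point, and it is exactly where the disjoint placement of $\mathcal{A}_1$ and $\mathcal{A}_2$ in the two blocks is used; were there nonzero off-diagonal blocks, the witness $\mathbf{u}_i+\mathbf{e}_i$ could spill into the other block and $W_i$ would be too small.

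Finally I would assemble the dimension count. Because $W_1$ and $W_2$ are supported on complementary coordinate blocks, $W_1\cap W_2=\{\mathbf{0}\}$, and both are subspaces of $V$, so $l^* = \dim V \ge \dim(W_1+W_2) = \dim W_1 + \dim W_2 \ge l^*_1 + l^*_2$. Together with the upper bound from Lemma~\ref{lem3suicp} this yields $l^* = l^*_1 + l^*_2$, with the matrix (\ref{eq1lem4}) attaining the optimum, completing the proof.
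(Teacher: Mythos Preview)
Your argument is correct and follows the same overall strategy as the paper—use the block-diagonal structure of the fitting matrix to decompose both decodability and weak security into the two subproblems—but your execution of the lower bound is cleaner than the paper's. The paper invokes Lemma~\ref{lemdsc2} (together with the remark preceding it that every encoding matrix arises as a row basis of a completion of the fitting matrix) to assert directly that any encoding matrix for $\mathcal{I}(\mathcal{D},\mathcal{A})$ ``must be of the form'' (\ref{eq1lem4}); it then observes that the weak-security condition of Lemma~\ref{nesusec} splits coordinatewise, so minimizing $l_1+l_2$ reduces to minimizing each $l_i$ separately. That assertion is a little loose as written (a row basis of a block-diagonal matrix need not itself be block-diagonal, and a non-optimal $\mathbf{G}$ could have rows outside the span of any completion), and the paper does not spell out the dimension bookkeeping. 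Your route—intersecting the row space $V$ with the two coordinate blocks to produce $W_1,W_2$, checking via Lemma~\ref{lemdsc2} and Lemma~\ref{lemdau} that each $W_i$ restricted to its block is a valid weakly secure subcode, and then using $W_1\cap W_2=\{\mathbf{0}\}$ for the dimension inequality—makes no a~priori structural assumption on $\mathbf{G}$ and turns the paper's implicit step into an explicit linear-algebraic argument. The trade-off is that the paper's version is shorter once one accepts the fitting-matrix characterization, whereas yours is self-contained and would survive even if one only knew the row-space criteria of Lemmas~\ref{lemdau} and~\ref{lemdsc2}.
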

\begin{proof}
	From Lemma \ref{lemdsc2}, we know that any encoding matrix (not necessary optimal) for the problem $\mathcal{I}(\mathcal{D},\mathcal{A})$ must be of the form given in (\ref{eq1lem4}), with $l_i \times m_i$ matrices  ${\bf{G}}_{i}$, $i \in \{1,2\}$. Using Lemma (\ref{nesusec}), for a weakly secure linear code we must have the following for any $j \in \mathcal{A}^c$.
	\begin{equation} 
	\nexists ~~({\bf{d}}_1|{\bf{d}}_2) \in \mathbb{F}_q^{1 \times (l_1+l_2)} : ({\bf{d}}_1{\bf{G}}_{1}|{\bf{d}}_2{\bf{G}}_{2}) \approx {\bf{e}}_{x_{\mathcal{A}}}^{(j,m)}.
	\end{equation}
	This implies (\ref{eq2lem4}), which is the condition for weak security for the individual encoding matrices ${\bf{G}}_{i}$ of problems $\mathcal{I}(\mathcal{D}_i,\mathcal{A}_i)$, $i \in \{1,2\}$, according to Lemma \ref{nesusec}. Hence minimizing $l_1+l_2$ is equivalent to minimizing each $l_i$, $i \in \{1,2\}$. Thus the result.
	\begin{equation}
	\nexists ~~{\bf{d}}_i \in \mathbb{F}_q^{1 \times l_i} : {\bf{d}}_i{\bf{G}}_{i} \approx {\bf{e}}_{x_{\mathcal{A}_i}}^{(j_i,m_i)}, \forall j_i \in [m_i], \forall i \in \{1,2\}.
	\label{eq2lem4} 
	\end{equation}
\end{proof}

We now use Lemma $\ref{lem4}$, to obtain an optimal weakly secure linear code for a special case of the TUICP, which is then used in Theorem \ref{thm71} to show the optimality of the code-constructions given in the previous subsection (Theorem \ref{thm3} and  Corollaries \ref{cor2}-\ref{thm7}).

\begin{lem}
	Consider any TUICP $\mathcal{I}(\mathcal{D},\mathcal{P},\mathcal{A})$ with any type of interactions and $\mathcal{P}_{\{1.2\}}=\Phi$. We have $l^*=l^*_1+l^*_2$.
	\label{lemp12phi}
\end{lem}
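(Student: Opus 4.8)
The plan is to prove the two inequalities $l^{*}\le l^{*}_{1}+l^{*}_{2}$ and $l^{*}\ge l^{*}_{1}+l^{*}_{2}$ separately, the first by a construction and the second by arguing that \emph{every} weakly secure code for the full problem decomposes. The structural fact driving everything is that, since $\mathcal{P}_{\{1,2\}}=\Phi$, the sets $\mathcal{M}_{1}=\mathcal{P}_{1}$ and $\mathcal{M}_{2}=\mathcal{P}_{2}$ are disjoint, so the shared-message column block in (\ref{genrec}) is empty and \emph{any} linear two-sender encoding matrix is forced to be block diagonal, $\mathbf{G}=\left(\begin{smallmatrix}\mathbf{G}_{1} & \mathbf{0}\\ \mathbf{0} & \mathbf{G}_{2}\end{smallmatrix}\right)$, with $\mathbf{G}_{i}$ acting on $\mathbf{x}^{(i)}$ only. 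I would stress that this holds \emph{for any type of interactions}: the off-diagonal blocks of the fitting matrix of $\mathcal{D}$ may well be nonzero (cross side-information across $\mathcal{D}_{1}$ and $\mathcal{D}_{2}$), yet the sender-availability constraint alone forces block-diagonality. This is precisely what distinguishes the present lemma from Lemma \ref{lem4}, where block-diagonality was instead deduced from a block-diagonal \emph{fitting} matrix via Lemma \ref{lemdsc2}.

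For the upper bound I would invoke Theorem \ref{thm1} with the trivial partition $\mathcal{P}_{\{1,2\}}^{(1)}=\mathcal{P}_{\{1,2\}}^{(2)}=\Phi$, so that $\tilde{\mathcal{D}}_{i}=\mathcal{D}_{i}$ and $\tilde{\mathcal{A}}_{i}=\mathcal{A}_{i}$; the concatenation $\mathcal{C}=(\mathcal{C}_{1},\mathcal{C}_{2})$ of optimal weakly secure codes of the two sub-problems is then weakly secure and decodable, of length $l^{*}_{1}+l^{*}_{2}$, giving $l^{*}\le l^{*}_{1}+l^{*}_{2}$. For the lower bound I would take any optimal weakly secure linear code for $\mathcal{I}(\mathcal{D},\mathcal{P},\mathcal{A})$ with the (necessarily block-diagonal) matrix $\mathbf{G}$ of length $l=l_{1}+l_{2}$, and show that each $\mathbf{G}_{i}$ is itself a valid weakly secure code for $\mathcal{I}(\mathcal{D}_{i},\mathcal{A}_{i})$, whence $l_{i}\ge l^{*}_{i}$ and $l\ge l^{*}_{1}+l^{*}_{2}$. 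Validity of $\mathbf{G}_{1}$ for $\mathcal{D}_{1}$ follows from Lemma \ref{lemdsc2} together with the direct-sum splitting $\langle\mathbf{G}\rangle=\langle\mathbf{G}_{1}\rangle\oplus\langle\mathbf{G}_{2}\rangle$ under the coordinate partition: for a receiver demanding $x_{i}\in\mathcal{P}_{1}$, the witness $\mathbf{u}_{i}+\mathbf{e}_{i}\in\langle\mathbf{G}\rangle$ restricts in the first block to a vector of $\langle\mathbf{G}_{1}\rangle$ whose support lies in $\rchi_{i}\cap[m_{1}]$, i.e.\ the $\mathcal{D}_{1}$-neighbourhood of $v_{i}$, which is exactly the decodability condition for $\mathcal{D}_{1}$ (the second sender's codeword and any side-information lying in $\mathcal{P}_{2}$ carry no information about $x_{i}$).

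Weak security of $\mathbf{G}_{1}$ for $\mathcal{I}(\mathcal{D}_{1},\mathcal{A}_{1})$ I would establish by contraposition via Lemma \ref{nesusec}: if some $\mathbf{d}_{1}$ gave $\mathbf{d}_{1}\mathbf{G}_{1}\approx\mathbf{e}_{x_{\mathcal{A}_{1}}}^{(j,m_{1})}$ for some $j\in[m_{1}]\setminus\mathcal{A}_{1}$, then $\mathbf{d}=(\mathbf{d}_{1}\mid\mathbf{0})$ would yield $\mathbf{d}\mathbf{G}=(\mathbf{d}_{1}\mathbf{G}_{1}\mid\mathbf{0})\approx\mathbf{e}_{x_{\mathcal{A}}}^{(j,m)}$, since the all-zero second block completes the $x$'s in the $\mathcal{A}_{2}$-coordinates by the value $0$; this contradicts the weak security of $\mathbf{G}$, and the symmetric argument handles $\mathbf{G}_{2}$. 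The point demanding the most care is exactly this forced block-diagonality together with the decodability reduction: one must argue that it is the two-sender message-availability constraint, and not any (possibly full) structure of the fitting matrix, that eliminates cross terms, and that the opposite sender's transmission and the opposite-part side-information genuinely cannot assist either decoding or the eavesdropper — both of which reduce cleanly to the splitting $\langle\mathbf{G}\rangle=\langle\mathbf{G}_{1}\rangle\oplus\langle\mathbf{G}_{2}\rangle$ exploited above.
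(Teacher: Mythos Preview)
Your proof is correct. The paper's own proof is considerably terser and follows a slightly different route: it first argues informally that, since $\mathcal{P}_{\{1,2\}}=\Phi$, each codeword $\mathbf{c}_i$ depends only on $\mathcal{P}_i$, so the cross side-information that receivers in $\mathcal{D}_1$ may hold in $\mathcal{P}_2$ (and vice versa) is useless; the two-sender problem is then declared ``equivalent'' to a single-sender problem whose fitting matrix is block diagonal, and Lemma~\ref{lem4} is invoked to conclude $l^{*}=l^{*}_{1}+l^{*}_{2}$.

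You instead bypass Lemma~\ref{lem4} entirely: the upper bound comes from Theorem~\ref{thm1} with the trivial partition, and for the lower bound you work directly with the forced block-diagonal shape of $\mathbf{G}$, using the splitting $\langle\mathbf{G}\rangle=\langle\mathbf{G}_{1}\rangle\oplus\langle\mathbf{G}_{2}\rangle$ together with Lemma~\ref{lemdsc2} and Lemma~\ref{nesusec} to show that each $\mathbf{G}_{i}$ is simultaneously decodable and weakly secure for $\mathcal{I}(\mathcal{D}_{i},\mathcal{A}_{i})$. In effect you make rigorous precisely the ``equivalence'' step that the paper leaves informal, and you correctly isolate the key point the paper glosses over: block-diagonality here is forced by the sender-availability constraint, not by any assumption on the fitting matrix, so the cross blocks of $\mathbf{F}_{x}$ may be nonzero yet still play no role. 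The trade-off is that your argument re-derives part of the content of Lemma~\ref{lem4} inline, whereas the paper gets a one-line proof by reducing to it; your version is longer but more transparent and self-contained.
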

\begin{proof}
	Let the senders send the codewords ${\bf{c}}_i$, $i \in \{1,2\}$. As ${\bf{c}_1}$ is only a function of messages in  $\mathcal{P}_1$, the receivers in $\mathcal{D}_1$ can not make use of their side-information present in $\mathcal{D}_2$, and vice-versa. Hence, it is equivalent to a single-sender problem with receivers in $\mathcal{D}_i$ not having any side-information in $\mathcal{D}_{\{1,2\}\setminus i}$, $i \in \{1,2\}$, and the sender having $\mathcal{P}_1 \cup \mathcal{P}_2$. Now using the result of Lemma \ref{lem4}, we obtain the result of this lemma.
\end{proof}

In the following lemma, we obtain a lower bound on the optimal codelength of a special class of the TUICP using that of a sub-problem.  

\begin{lem}
	Consider any TUICP $\mathcal{I}(\mathcal{D},\mathcal{P},\mathcal{A})$ with any type of interactions and $\mathcal{A}_{\{1,2\}}=\mathcal{P}_{\{1,2\}}$. Consider another related problem $\mathcal{I}(\mathcal{D}',\mathcal{P}',\mathcal{A}')$, where $\mathcal{P}'=(\mathcal{P}_1,\mathcal{P}_2,\Phi)$, $\mathcal{A}'=(\mathcal{A}_1,\mathcal{A}_2,\Phi)$, and  $\mathcal{D}'$ is the induced sub-digraph of $\mathcal{D}$ induced by the messages in $\mathcal{P}'$. Then,  $l^{*}(\mathcal{I}(\mathcal{D}',\mathcal{P}',\mathcal{A}')) \leq l^{*}(\mathcal{I}(\mathcal{D},\mathcal{P},\mathcal{A}))$.
	\label{a12p12}
\end{lem}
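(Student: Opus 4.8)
The plan is to start from an optimal weakly secure linear code ${\bf{G}}$ for $\mathcal{I}(\mathcal{D},\mathcal{P},\mathcal{A})$, written in the block form (\ref{genrec}) in which $\mathcal{S}_i$ transmits $l_i$ symbols and $l^{*}=l_1+l_2$, and to show that each ${\bf{G}}_i$ (the block of $\mathcal{S}_i$'s rows restricted to the $\mathcal{P}_i$-columns) is \emph{by itself} a weakly secure linear index code for the single-sender sub-problem $\mathcal{I}(\mathcal{D}_i,\mathcal{A}_i)$, $i \in \{1,2\}$. Granting this, ${\bf{G}}_i$ has $l_i$ rows, so $l^{*}_i \le l_i$; summing gives $l^{*}_1+l^{*}_2 \le l_1+l_2 = l^{*}(\mathcal{I}(\mathcal{D},\mathcal{P},\mathcal{A}))$. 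Since $\mathcal{P}'_{\{1,2\}}=\Phi$, Lemma \ref{lemp12phi} applied to $\mathcal{I}(\mathcal{D}',\mathcal{P}',\mathcal{A}')$ gives $l^{*}(\mathcal{I}(\mathcal{D}',\mathcal{P}',\mathcal{A}'))=l^{*}_1+l^{*}_2$, and the claimed inequality follows at once. So everything reduces to the two-part claim about ${\bf{G}}_1$; the argument for ${\bf{G}}_2$ is symmetric.

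\emph{Decodability.} I would first check ${\bf{G}}_1$ is a valid index code for $\mathcal{D}_1$ via Lemma \ref{lemdsc2}. Because ${\bf{G}}$ decodes the larger problem, for each $j$ with $x_j \in \mathcal{P}_1$ there exist ${\bf{d}}_j=({\bf{d}}_j^{(1)} \mid {\bf{d}}_j^{(2)})$ and ${\bf{u}}_j$ with $supp({\bf{u}}_j) \subseteq \rchi_j$ and ${\bf{d}}_j{\bf{G}} = {\bf{u}}_j + {\bf{e}}_j$. Reading this identity only in the columns indexed by $\mathcal{P}_1$, and using that $\mathcal{S}_2$'s rows vanish there (the lower-left ${\bf{0}}$ in (\ref{genrec})), one obtains ${\bf{d}}_j^{(1)}{\bf{G}}_1 = {\bf{u}}_j^{(1)} + {\bf{e}}_j$, where ${\bf{u}}_j^{(1)}$ is the $\mathcal{P}_1$-part of ${\bf{u}}_j$ and its support lies in the indices of $\mathcal{K}_j \cap \mathcal{P}_1$, i.e. the side-information of $v_j$ inside $\mathcal{D}_1$. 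By Lemma \ref{lemdsc2} this is precisely the condition for ${\bf{G}}_1$ to decode $x_j$ in $\mathcal{I}(\mathcal{D}_1,\mathcal{A}_1)$. I expect this to be the main obstacle: a receiver of the larger problem might in principle exploit $\mathcal{S}_2$'s transmission or the shared messages $\mathcal{P}_{\{1,2\}}$, so it is not obvious the restricted code still decodes. The block structure of (\ref{genrec}) is what saves it, since $x_j$ (with $j$ a $\mathcal{P}_1$-index) occurs only in $\mathcal{S}_1$'s transmission, forcing the $\mathcal{P}_1$-projection of the decoding relation to already lie in $\langle {\bf{G}}_1 \rangle$.

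\emph{Security.} I would then establish weak security of ${\bf{G}}_1$ by contradiction through Lemma \ref{nesusec}. Suppose some index $j$ with $x_j \in \mathcal{P}_1 \setminus \mathcal{A}_1$ and some ${\bf{d}}_1$ satisfy ${\bf{d}}_1{\bf{G}}_1 \approx {\bf{e}}_{x_{\mathcal{A}_1}}^{(j,m_1)}$. Lifting ${\bf{d}}_1$ to the two-sender vector $({\bf{d}}_1 \mid {\bf{0}})$ and computing $({\bf{d}}_1 \mid {\bf{0}}){\bf{G}}$ from (\ref{genrec}): the $\mathcal{P}_1$-columns reproduce ${\bf{d}}_1{\bf{G}}_1$, the $\mathcal{P}_2$-columns give ${\bf{0}}$, and the $\mathcal{P}_{\{1,2\}}$-columns give ${\bf{d}}_1{\bf{G}}_{\{1,2\}}^1$, an arbitrary vector. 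Here the hypothesis $\mathcal{A}_{\{1,2\}}=\mathcal{P}_{\{1,2\}}$ (so $\mathcal{P}_{\{1,2\}} \subseteq \mathcal{A}$) is decisive: every coordinate of the $\mathcal{P}_{\{1,2\}}$-block sits where ${\bf{e}}_{x_{\mathcal{A}}}^{(j,m)}$ carries an unknown `$x$', so those arbitrary entries are legitimate completions, the ${\bf{0}}$ on the $\mathcal{P}_2$-block completes its coordinates, and the $\mathcal{P}_1$-block already matches ${\bf{e}}_{x_{\mathcal{A}_1}}^{(j,m_1)}$. Hence $({\bf{d}}_1 \mid {\bf{0}}){\bf{G}} \approx {\bf{e}}_{x_{\mathcal{A}}}^{(j,m)}$ with $j \in \mathcal{A}^c$, contradicting the weak security of ${\bf{G}}$ by Lemma \ref{nesusec}. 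Combining the two parts shows ${\bf{G}}_1$, and symmetrically ${\bf{G}}_2$, are weakly secure linear codes for their sub-problems, which closes the reduction of the first paragraph.
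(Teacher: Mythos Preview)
Your proof is correct and follows essentially the same route as the paper: take an optimal encoding matrix ${\bf G}$ for $\mathcal{I}(\mathcal{D},\mathcal{P},\mathcal{A})$ in the block form (\ref{genrec}), show that the blocks ${\bf G}_1,{\bf G}_2$ remain valid (Lemma~\ref{lemdsc2}) and weakly secure (Lemma~\ref{nesusec}, using $\mathcal{A}_{\{1,2\}}=\mathcal{P}_{\{1,2\}}$ so the $\mathcal{P}_{\{1,2\}}$-columns are ``don't care'' coordinates), and conclude the inequality. The only cosmetic difference is that the paper assembles ${\bf G}'=\mathrm{diag}({\bf G}_1,{\bf G}_2)$ and declares it a code for $\mathcal{I}(\mathcal{D}',\mathcal{P}',\mathcal{A}')$ directly, whereas you bound each $l^*_i$ separately and then invoke Lemma~\ref{lemp12phi}; your decodability paragraph is also more explicit than the paper's one-line appeal to Lemma~\ref{lemdsc2}.
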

\begin{proof}
	Without loss of generality, we consider an optimal encoding matrix for $\mathcal{I}(\mathcal{D},\mathcal{P},\mathcal{A})$ to be of the following form.
	\begin{equation}
	{\bf{G}}=
	\left(
	\begin{array}{c|c|c}
	{\bf{G}}_{1} & {\bf{0}}   & {\bf{G}}_{\{1,2\}}^{1} \\ 
	\hline
	{\bf{0}} & {\bf{G}}_{2}  & {\bf{G}}_{\{1,2\}}^{2} \\
	\end{array}
	\right).
	\label{eq1thm2}
	\end{equation}
	From Lemma \ref{nesusec}, we have the following (assuming ${\bf{G}}_i$ has $l_i$ rows, $i \in \{1,2\}$). 
	\begin{gather*} 
	\nexists ~~({\bf{d}}_1|{\bf{d}}_2) \in \mathbb{F}_q^{1 \times (l_1+l_2)} : \\ ({\bf{d}}_1{\bf{G}}_{1}|{\bf{d}}_2{\bf{G}}_{2}|{\bf{d}}_1{\bf{G}}_{\{1,2\}}^1+{\bf{d}}_2{\bf{G}}_{\{1,2\}}^2) \approx {\bf{e}}_{x_{\mathcal{A}}}^{(i,m)}.
	\end{gather*}
	Hence, we have the following.
	\begin{equation}
	\nexists ~~{\bf{d}}_i \in \mathbb{F}_q^{1 \times l_i} : {\bf{d}}_i{\bf{G}}_{i} \approx {\bf{e}}_{x_{\mathcal{A}_i}}^{(j_i,m_i)}, \forall j_i \in [m_i], \forall i \in \{1,2\}.
	\end{equation}	 
	Thus using Lemma \ref{lemdsc2}, the matrix ${\bf{G}}'=\left(
	\begin{array}{c|c}
	{\bf{G}}_{1} & {\bf{0}} \\ 
	\hline
	{\bf{0}} & {\bf{G}}_{2} \\
	\end{array}
	\right)$ is an encoding matrix for the problem $\mathcal{I}(\mathcal{D}',\mathcal{P}',\mathcal{A}')$. Hence the result.
\end{proof}

\begin{rem}
   Note that in the classical index coding problem optimal codelength of any subproblem (of a given problem) is not more than that of the original problem. This is not proved in general for the case of weakly secure index coding problem.   	
\end{rem}

We now identify some cases where the constructed codes in Theorem \ref{thm3} and Corollaries \ref{cor2}-\ref{thm7} are optimal, using the result of Lemmas \ref{lemp12phi} and  \ref{a12p12}.

\begin{thm}
For any TUICP $\mathcal{I}(\mathcal{D},\mathcal{P},\mathcal{A})$ given in Theorem \ref{thm3} and Corollaries \ref{cor2}-\ref{thm7}, with $\mathcal{A}_{\{1,2\}}=\mathcal{P}_{\{1,2\}}$, if $l^* \leq l^*_1 + l^*_2$, then the codes constructed are optimal. 	
\label{thm71}
\end{thm}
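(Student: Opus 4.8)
The plan is to bracket $l^{*}$ from below by $l^{*}_{1}+l^{*}_{2}$ and from above by the length of the constructed code, and then show the two bounds meet under the hypothesis. The only genuinely new ingredient is the lower bound $l^{*}\ge l^{*}_{1}+l^{*}_{2}$; everything else is bookkeeping on top of the achievability already established in Theorem \ref{thm3} and Corollaries \ref{cor2}--\ref{thm7}.

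First I would establish the lower bound. Since $\mathcal{A}_{\{1,2\}}=\mathcal{P}_{\{1,2\}}$, Lemma \ref{a12p12} applies verbatim: it produces the reduced problem $\mathcal{I}(\mathcal{D}',\mathcal{P}',\mathcal{A}')$ with $\mathcal{P}'=(\mathcal{P}_{1},\mathcal{P}_{2},\Phi)$ and $\mathcal{A}'=(\mathcal{A}_{1},\mathcal{A}_{2},\Phi)$, and yields $l^{*}(\mathcal{I}(\mathcal{D}',\mathcal{P}',\mathcal{A}'))\le l^{*}$. Because $\mathcal{P}'_{\{1,2\}}=\Phi$, Lemma \ref{lemp12phi} evaluates the left-hand side exactly as the sum of the optimal lengths of the two single-sender subproblems of $\mathcal{I}(\mathcal{D}',\mathcal{P}',\mathcal{A}')$. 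The key observation is that these subproblems are unchanged by the reduction: the subdigraph of $\mathcal{D}'$ induced by $\mathcal{P}_{i}$ is exactly $\mathcal{D}_{i}$ (the subdigraph of $\mathcal{D}$ induced by $\mathcal{P}_{i}$), and the eavesdropper's restricted side-information is still $\mathcal{A}_{i}$, so the $i$th subproblem optimum equals $l^{*}_{i}$ for $i\in\{1,2\}$. Chaining the two lemmas gives $l^{*}_{1}+l^{*}_{2}\le l^{*}$.

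Next I would invoke achievability. Each of Theorem \ref{thm3} and Corollaries \ref{cor2}--\ref{thm7} exhibits an explicit weakly secure linear code whose length $L$ equals the corresponding $\max$-expression, so $l^{*}\le L$; moreover $L\ge l^{*}_{1}+l^{*}_{2}$ in every one of these cases. The hypothesis $l^{*}\le l^{*}_{1}+l^{*}_{2}$, together with the lower bound, forces $l^{*}=l^{*}_{1}+l^{*}_{2}$ and places us in exactly the regime where the relevant $\max$-expression collapses to $l^{*}_{1}+l^{*}_{2}$ (for Case II-B this is $l^{*}_{\{1,2\}}\le l^{*}_{1}+l^{*}_{2}$; for Case II-C it is $l^{*}_{\{1,2\}}\le l^{*}_{1}$; and analogously for II-D and II-E, where the term carrying $l^{*}_{\{1,2\}}$ is dominated). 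Hence $L=l^{*}_{1}+l^{*}_{2}=l^{*}$, so the constructed code attains the optimal length and is therefore optimal.

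The main obstacle is the lower bound step, and specifically the need to route it through the auxiliary $\mathcal{P}_{\{1,2\}}=\Phi$ problem rather than comparing $l^{*}$ directly with the subproblem optima. As the remark preceding the theorem warns, weakly secure index coding does not enjoy the monotonicity $l^{*}\ge l^{*}_{\{1,2\}}$ that holds in the classical setting, so one cannot lower-bound $l^{*}$ using the common-message subproblem. Lemma \ref{a12p12} sidesteps this by supplying only the single inequality we actually need, $l^{*}(\mathcal{I}(\mathcal{D}',\mathcal{P}',\mathcal{A}'))\le l^{*}$, and the care required is to verify that the reduction preserves the identities of the two single-sender subproblems so that Lemma \ref{lemp12phi} can be applied with subproblem optima equal to $l^{*}_{1}$ and $l^{*}_{2}$.
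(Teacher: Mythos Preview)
Your core argument is exactly the paper's: chain Lemma~\ref{a12p12} (applicable since $\mathcal{A}_{\{1,2\}}=\mathcal{P}_{\{1,2\}}$) with Lemma~\ref{lemp12phi} to get $l^{*}_{1}+l^{*}_{2}\le l^{*}$, then combine with the hypothesis $l^{*}\le l^{*}_{1}+l^{*}_{2}$ to force $l^{*}=l^{*}_{1}+l^{*}_{2}$. The paper's proof is literally that one-line sandwich and nothing more.

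Where your write-up goes beyond the paper, it introduces a step that does not follow. From $l^{*}=l^{*}_{1}+l^{*}_{2}$ you cannot deduce that the $\max$-expression collapses: in Case~II-C you assert this forces $l^{*}_{\{1,2\}}\le l^{*}_{1}$, but the chain $l^{*}_{1}+l^{*}_{2}=l^{*}\le L=\max\{l^{*}_{\{1,2\}},l^{*}_{1}\}+l^{*}_{2}$ only yields $l^{*}_{1}\le\max\{l^{*}_{\{1,2\}},l^{*}_{1}\}$, which is vacuous; and, as the paper's own remark warns, weak security offers no monotonicity such as $l^{*}\ge l^{*}_{\{1,2\}}$ to close this gap. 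The paper's terse proof is equally silent on why the specific constructed code of length $L$ satisfies $L=l^{*}$. The workable reading of the hypothesis is that ``$l^{*}\le l^{*}_{1}+l^{*}_{2}$'' is the achievable bound supplied by the construction itself (so that the collapse of the $\max$-expression is part of the assumption, not a consequence); under that reading both your argument and the paper's are complete, and you should drop the claim that the collapse is \emph{forced} by $l^{*}=l^{*}_{1}+l^{*}_{2}$.
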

\begin{proof}
  From Lemmas \ref{lemp12phi}  and \ref{a12p12}, we see that $l^*_1+l^*_2 \leq l^*(\mathcal{I}(\mathcal{D},\mathcal{P},\mathcal{A})) \leq l^*_1+l^*_2$. Hence, we have the result.
\end{proof}

We illustrate the theorem using an example.

\begin{exmp}
	Consider the following TUICP with $\mathcal{M}_1=\{x_1,x_2,x_3,x_4,x_5,x_6\}$, and 
	$\mathcal{M}_2=\{x_5,x_6,x_7,x_8,x_9,x_{10}\}$. Hence, $\mathcal{P}_1=\{x_1,x_2,x_3,x_4\}$, $\mathcal{P}_2=\{x_7,x_8,x_9,x_{10}\}$, and $\mathcal{P}_{\{1,2\}}=\{x_5,x_6\}$. The $i$th receiver demands $x_i$, $i \in \{1,2,\cdots,10\}$. The side-information of each receiver is: 
	\begin{gather*}
	\mathcal{K}_1=\{x_2,x_3,x_5,x_6\},\mathcal{K}_2=\{x_3,x_4,x_5,x_6\},\\ \mathcal{K}_3 = \{x_4,x_5,x_6\},\mathcal{K}_4=\{x_1,x_5,x_6\},\\ \mathcal{K}_5=\mathcal{K}_6=x_{[10]}\setminus \{x_5,x_6\},\\ \mathcal{K}_7 = \{x_5,x_6,x_8,x_9,x_{10}\},\mathcal{K}_8=\{x_5,x_6,x_9\},\\ \mathcal{K}_9=\{x_5,x_6,x_7,x_{10}\},\mathcal{K}_{10}=\{x_5,x_6,x_7,x_9\}.
	\end{gather*}
	It can be easily verified that this problem belongs to Case II-B. The eavesdropper has side-information given by $x_{\mathcal{A}}=\{x_2,x_5,x_6,x_8,x_9\}$.  Hence, $\mathcal{A}_1=\{2\},\mathcal{A}_2=\{8,9\},\mathcal{A}_{\{1,2\}}=\{5,6\}$.  It can be easily verified that the following are valid optimal codes for $\mathcal{I}(\mathcal{D}_{\mathcal{T}},\mathcal{A}_{\mathcal{T}})$, for non-empty $\mathcal{T} \in \{1,2\}$. 
	\begin{gather*}
	\mathcal{C}_1 = (x_1+x_2+x_3,~x_2+x_3+x_4,~x_3+x_4),\\
	\mathcal{C}_2 = (x_7+x_9+x_{10},~x_8+x_9),~ \mathcal{C}_{\{1,2\}} = (x_5,~x_6).
	\end{gather*} 
	Hence, $l^*_1=3,l^*_2=2$, and $l^*_{\{1,2\}}=2$. All the conditions given in the Theorem \ref{thm71} are satisfied. Hence, the overall optimal code has length $l^*=5$, and is given as follows:
	\begin{gather*}
	\mathcal{C}_1+\mathcal{C}_{\{1,2\}} = \\ (x_1+x_2+x_3+x_5,~x_2+x_3+x_4+x_6,~x_3+x_4),\\
	 	\mathcal{C}_2 = (x_7+x_9+x_{10},~x_8+x_9).
	\end{gather*} 
\end{exmp}

\section{Conclusion}   
Weakly secure linear index codes are constructed for different classes of the TUICP using those of the sub-problems. For some classes of the TUICP, the constructions are proven to give optimal codes. 

\section*{Acknowledgment}
This work was supported partly by the Science and Engineering Research Board (SERB) of Department of Science and Technology (DST), Government of India, through J.C. Bose National Fellowship to B. S. Rajan.

\end{document}